\documentclass[12pt]{article}
\usepackage{amsmath, amsthm, amssymb, setspace, mathrsfs, color, verbatim, graphicx,hyperref, amsfonts, bm, rotating, lscape, multicol, multirow,bbm,natbib}
\usepackage[left=0.7in,right=1.0in,top=0.7in,bottom=1.0in,nohead]{geometry}
\pdfoutput=1
\setstretch{1}

\newtheorem{thm}{Theorem}
\newtheorem{asm}[thm]{Assumption}
\newtheorem{lem}[thm]{Lemma}

\def\bse{\begin{eqnarray*}}
\def\ese{\end{eqnarray*}}
\def\be{\begin{eqnarray}}
\def\ee{\end{eqnarray}}


\newcommand{\utwi}[1]{\mbox{\boldmath $ #1$}}
\newcommand{\Y}{{\utwi{Y}}}
\newcommand{\X}{{\utwi{X}}}
\providecommand{\e}[1]{\ensuremath{\times 10^{#1}}}
\DeclareMathOperator*{\argmax}{argmax}
\DeclareMathOperator*{\argmin}{argmin}

\begin{document} 

\title{A Nonparametric Approach for Multiple Change Point Analysis of Multivariate Data}
\author{David S. Matteson and Nicholas A. James \\ Cornell University\footnote{
Matteson is an Assistant Professor, 
Department of Statistical Science,
Cornell University,
1196 Comstock Hall,
Ithaca, NY 14853
(Email: \href{mailto:matteson@cornell.edu}{matteson@cornell.edu}; Web: \url{http://www.stat.cornell.edu/\~matteson/}). 
\hspace{1cm}
James is a PhD Candidate, 
School of Operations Research and Information Engineering,
Cornell University,
206 Rhodes Hall,
Ithaca, NY 14853
(Email: \href{mailto:nj89@cornell.edu}{nj89@cornell.edu}; Web: \url{https://courses.cit.cornell.edu/nj89/}).}}
\date{\today}
\maketitle


\begin{abstract}

Change point analysis has applications in a wide variety of fields. The general problem concerns the inference of a change in  distribution for a 
set of time-ordered observations. Sequential detection is an online version in which new data is continually arriving and is analyzed adaptively. We 
are concerned with the related, but distinct, offline version, in which retrospective analysis of an entire sequence is performed. For a set of 
multivariate observations of arbitrary dimension, we consider nonparametric estimation of both the number of change points and the positions at which they occur. We do not 
make any assumptions regarding the nature of the change in distribution or any distribution assumptions beyond the existence of the $\alpha$th absolute moment, for some $\alpha \in (0,2)$. 
Estimation is based on hierarchical clustering and we propose both divisive and agglomerative algorithms. 
The divisive method is shown to provide consistent estimates of both the number and location of change points under standard regularity assumptions. 
We compare the proposed approach with competing methods in a simulation study. 
Methods from cluster analysis are applied to assess performance and to allow simple comparisons of location estimates, even when the estimated number differs. 
We conclude with applications in genetics, finance and spatio-temporal analysis.

\end{abstract}

\par\vfill\noindent
{\bf KEY WORDS:}
Cluster analysis;
Multivariate time series;
Permutation tests;
Signal processing;
$U$-statistics.
\par\medskip\noindent
{\bf Short title: Nonparametric Change Point Analysis}

\clearpage\pagebreak\newpage
\newlength{\gnat}
\setlength{\gnat}{24pt} 
\baselineskip=\gnat

\section{Introduction}\label{introduction}
Change point analysis is the process of detecting distributional changes within time-ordered observations. This arises in financial modeling 
\citep{Talih:2005}, where correlated assets are traded and models are based on historical data represented as multivariate time series. It is applied in bioinformatics \citep{Muggeo:2011} 
to identify genes that are associated with specific cancers and other diseases. Change point analysis is also used to detect credit card fraud 
\citep{Bolton:2002} and other anomalies \citep{Sequeira:2002, Akoglu:2010}; and for data classification in data mining \citep{Mampaey:2011}. 
Applications can also be found in signal processing, where change point analysis can be used to detect significant changes within a stream of images \citep{Kim:2009}.

While change point analysis is important in a variety of fields, the methodologies that have been developed to date often assume a single or known number of change points. This assumption is often unrealistic, as seen in Section \ref{real-data}. Increasingly, applications also require detecting changes in multivariate data, for which traditional methods have limited applicability. 
To address these shortcomings, we propose a new methodology, based on $U$-statistics, that is capable of consistently estimating an unknown number of multiple change point locations. The proposed methods are broadly defined for observations from an arbitrary, but fixed dimension.

In general, change point analysis may be performed in either parametric and nonparametric settings. 
Parametric analysis necessarily assumes that the underlying distributions belong to some known family, 
and the likelihood function plays a major role. 
For example, in \cite{Carlin:1992} and \cite{Lavielle:2006} analysis is performed by maximizing a log-likelihood function, while \cite{Page:1954} examines the ratio of log-likelihood functions to estimate change points. 
Additionally, \cite{Davis:2006} combine the log-likelihood, the minimum description length, and a genetic algorithm in order to identify change points. 
Nonparametric alternatives are applicable in a wider range of applications than are parametric ones \citep{Hariz:2007}. Nonparametric approaches often rely heavily on the estimation of density functions \citep{Karahara:2011}, though they have also been performed using rank statistics \citep{Fong:2011}. 
We propose a nonparametric approach based on Euclidean distances between sample observations. It is simple to calculate and avoids the difficulties associated with multivariate density estimation.
 
Change point methods are often directly motivated by specific fields of study. For example, \cite{Johnson:2011} discusses an approach that is rooted in information theory, and ideas from model selection are applied for determining both the number and location of change points in \cite{Yao:1987} and \cite{Zhang:2007}. 
The proposed approach is motivated by methods from cluster analysis \citep{Rizzo:2005}. 

Change point algorithms either estimate all change points concurrently or hierarchically. 
Concurrent methods generally optimize a single objective function.  For example, given that there are $k$ change points, \cite{Hawkins:2001} estimates change point locations by maximizing a likelihood function. \cite{Lavielle:2006} accomplish 
the same task by minimizing a loss function. 
Sequential methods generally estimate change points one at a time \citep{Guralnik:1999}, although some have the ability to estimate two or more at any given stage \citep{Olshen:2004}. 
Such approaches are often characterized as bisection procedures. 
The proposed method utilizes a bisection approach for its computational efficiency. 

We propose a new method that can detect {any} distributional change within an independent sequence, and which does not make any distributional assumptions beyond the existence of the $\alpha$th absolute moment, for some $\alpha \in (0,2)$. 
Estimation is performed in a manner that simultaneously identifies both the number and locations of change points. 
In Section \ref{methodology} we describe our methodology; its properties are discussed in Section \ref{algorithm-analysis}. 
In Sections \ref{simulations} and \ref{real-data} we present the results of our procedure when applied to simulated and real data, respectively. 
In Section \ref{agglomerative} we propose an alternative algorithm and illustrate its use on a novel spatio-temporal application.
Concluding remarks are in Section \ref{conclusion}.

\section{Methodology}\label{methodology}

To highlight the generality of the proposed method, we briefly summarize the different conditions under which analysis may be performed, in increasing complexity. 
Let $Z_1,Z_2,\dots,Z_T\in\mathbb R^d$ be an independent sequence of time-ordered observations.
Throughout this manuscript, the time between observations is assumed positive; it may be fixed or randomly distributed. 
The time index simply denotes the time order. 
In the simplest case, there is a single hypothesized change point location $\tau$.
Specifically, $Z_1,\dots,Z_\tau\stackrel{iid}{\sim} F_1$ and $Z_{\tau+1},\dots,Z_T \stackrel{iid}{\sim} F_2$, in which $F_1$ and $F_2$ are unknown probability distributions. 
Here we test for homogeneity in distribution,  
$H_0: F_1 = F_2$ verses $H_A: F_1 \neq F_2$. For univariate observations with continuous distributions the familiar Kolmogorov-Smirnov test may be applied, and in the general case the approach in \cite{Rizzo:2010} may be applied. 
If $H_0$ is rejected we conclude there is a change point at $\tau$, otherwise we conclude there is no distributional change in the observations. 

A slight modification of the above setting assumes instead that the change point location is unknown, but assumes that at most only one change point exists. 
A natural way to proceed is to choose $\tau$ as the most likely location for a change point, based on some criterion.
Here, $\tau$ is chosen from some subset of $\{1,2,\dots,T-1\}$, then a test for homogeneity is performed. 
This should necessarily incorporate the fact that $\tau$ is unknown. 

Now, suppose there 
is a known number of change points $k$
in the series, but with unknown locations. 
Thus, there exist change points $0 <\tau_1<\cdots<\tau_k<T$, that partition the sequence into 
$k+1$ clusters, such that observations within clusters are identically distributed, and observations between adjacent clusters are not.
 A naive approach for estimating the best of all ${\cal O}(T^k)$ change point locations quickly becomes computationally intractable for $k\ge 3$. 
One remedy is to instead maximize the objective function through the use of dynamic programming as in \cite{Cappe:2007}, 
\cite{Rigaill:2010} and \cite{Fong:2011}.
 
Finally, in the most general case, both the number of change points as well as their locations are unknown. 
Here, the naive approach to concurrent estimation becomes infeasible. As such,
bisection \citep{Vostrikova:1981, Cho:2012} and model selection procedures \citep{Lavielle:2006, Arlot:2012} are popular under 
these conditions.

We now present a nonparametric technique, which we call E-Divisive, for performing multiple change point analysis of a sequence of  multivariate observations. The E-Divisive 
method combines {bisection} 
\citep{Vostrikova:1981} with a multivariate divergence measure from \cite{Rizzo:2005}. 
We first discuss measuring differences in multivariate distributions.
We then propose a procedure for hierarchically estimating change point locations. 
We conclude this section by discussing the hierarchical statistical testing used to determine the number of change points.

\subsection{Measuring Differences in Multivariate Distributions}\label{measuring-differences}
%
For complex-valued functions $\phi(\cdot),$ the complex conjugate of $\phi$ is denoted by $\overline{\phi},$ 
and the absolute square $|\phi|^2$ is defined as $\phi\overline{\phi}.$ 
The Euclidean norm of $x \in \mathbb{R}^d$ is $|x|_d,$ or simply $|x|$ when there is no ambiguity. 
A primed variable such as $X'$ is an independent copy of $X;$ that is, $X$ and $X'$ are independent and identically distributed (iid).

For random variables $X, Y \in \mathbb{R}^{d},$ 
let $\phi_x$ and $\phi_y$ denote the 
characteristic functions of $X$ and $Y$, respectively.
A divergence measure between multivariate distributions may be defined as 
\be\label{dMetric}
\int_{\mathbb{R}^{d}} | \phi_{x}(t) - \phi_{y}(t) |^2 \, w(t) \, dt, 
\ee
in which $w(t)$ denotes an arbitrary positive weight function, for which the above integral exists. 
In consideration of \cite{Rizzo:2005}, we use the following weight function
%
\be\label{w}
w(t; \alpha) = \left( \frac{2 \pi^{d/2}\Gamma(1 - \alpha/2)}{\alpha 2^{\alpha} \Gamma((d+\alpha)/2)} |t|^{d + \alpha} \right)^{-1},
\ee
for some fixed constant $\alpha \in (0,2).$ 
Then, if $E|X|^{\alpha}, E|Y|^{\alpha} < \infty,$ 
a characteristic function based divergence measure 
may be defined as 
\be\label{DXY}
{\cal D}(X, Y; \alpha) = \int_{\mathbb{R}^{d}} | \phi_{x}(t) - \phi_{y}(t) |^2 \,  \left( \frac{2 \pi^{d/2}\Gamma(1 - \alpha/2)}{\alpha 2^{\alpha} \Gamma((d+\alpha)/2)} |t|^{d + \alpha} \right)^{-1} \, dt. 
\ee

Suppose $X, X' \stackrel{iid}{\sim} F_x$ and $Y, Y' \stackrel{iid}{\sim} F_y,$ and that $X, X', Y,$ and $Y'$ are mutually independent. 
If $E|X|^{\alpha}, E|Y|^{\alpha} < \infty,$ then we may employ an alternative divergence measure based on Euclidean distances, defined by \cite{Rizzo:2005} as 
\be\label{EXY}
{\cal E}(X, Y; \alpha) = 2E|X - Y|^{\alpha} - E|X - X'|^{\alpha} - E|Y - Y'|^{\alpha}.
\ee

\begin{lem}\label{thm2}
For any pair of independent random vectors $X, Y \in \mathbb{R}^{d}$, 
and for any $\alpha \in (0,2),$
if $E(|X|^{\alpha} + |Y|^{\alpha}) < \infty,$ 
then ${\cal E}(X, Y; \alpha) = {\cal D}(X, Y; \alpha),$ 
${\cal E}(X, Y; \alpha) \in [0,\infty),$ 
and ${\cal E}(X, Y; \alpha) = 0$ if and only if $X$ and $Y$ are identically distributed.
\end{lem}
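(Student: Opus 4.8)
The plan is to reduce all three claims to one classical integral identity. The cornerstone is that, for every $x \in \mathbb{R}^d$ and every $\alpha \in (0,2)$,
\begin{equation*}
\int_{\mathbb{R}^d} \bigl(1 - \cos\langle t, x\rangle\bigr)\, w(t;\alpha)\, dt \;=\; |x|^\alpha ,
\end{equation*}
which is precisely why the constant in (\ref{w}) takes the form it does (see \cite{Rizzo:2005}); it is obtained by passing to polar coordinates, reducing to a one-dimensional radial integral, and evaluating the latter with the Gamma-function identities that generate the normalizing constant. Granting this, the moment hypothesis $E(|X|^\alpha + |Y|^\alpha)<\infty$ together with the elementary bound $|u-v|^\alpha \le 2^{(\alpha-1)^+}(|u|^\alpha+|v|^\alpha)$ shows that $E|X-Y|^\alpha$, $E|X-X'|^\alpha$ and $E|Y-Y'|^\alpha$ are all finite, so every quantity in (\ref{EXY}) is well defined.

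Next I would expand the integrand of (\ref{DXY}). Writing $\phi_x(t)=Ee^{i\langle t,X\rangle}$ and using that $X-X'$ and $Y-Y'$ are symmetric about the origin gives $|\phi_x(t)|^2=E\cos\langle t,X-X'\rangle$, $|\phi_y(t)|^2=E\cos\langle t,Y-Y'\rangle$, and $2\,\mathrm{Re}\bigl(\phi_x(t)\overline{\phi_y(t)}\bigr)=2E\cos\langle t,X-Y\rangle$, so that
\begin{equation*}
|\phi_x(t)-\phi_y(t)|^2 = 2E\bigl[1-\cos\langle t,X-Y\rangle\bigr] - E\bigl[1-\cos\langle t,X-X'\rangle\bigr] - E\bigl[1-\cos\langle t,Y-Y'\rangle\bigr],
\end{equation*}
the inserted constants cancelling because $2-1-1=0$. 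Multiplying by $w(t;\alpha)$, integrating over $\mathbb{R}^d$, and interchanging expectation with integration — legitimate because each of the three terms is nonnegative, so Tonelli's theorem applies and the resulting iterated integrals are the finite quantities $E|X-Y|^\alpha$, $E|X-X'|^\alpha$, $E|Y-Y'|^\alpha$ by the identity above — yields $\mathcal{D}(X,Y;\alpha)=2E|X-Y|^\alpha-E|X-X'|^\alpha-E|Y-Y'|^\alpha=\mathcal{E}(X,Y;\alpha)$, the first assertion.

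The remaining statements then follow at once. Since $|\phi_x(t)-\phi_y(t)|^2\ge 0$ and $w(t;\alpha)>0$ for $t\neq 0$, the integral defining $\mathcal{D}(X,Y;\alpha)$ is nonnegative, and it is finite by the preceding step, so $\mathcal{E}(X,Y;\alpha)\in[0,\infty)$. If $X$ and $Y$ have the same distribution then $\phi_x\equiv\phi_y$ and the integrand is identically zero, so $\mathcal{E}(X,Y;\alpha)=0$; conversely, if $\mathcal{E}(X,Y;\alpha)=\mathcal{D}(X,Y;\alpha)=0$ then the nonnegative integrand $|\phi_x(t)-\phi_y(t)|^2 w(t;\alpha)$ vanishes for Lebesgue-almost every $t$, hence $\phi_x(t)=\phi_y(t)$ for a.e.\ $t$ and then for all $t$ by continuity, so $X$ and $Y$ are identically distributed by the uniqueness theorem for characteristic functions.

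The only genuinely delicate points are the evaluation of the radial integral that fixes the constant in $w(\cdot;\alpha)$ and the justification of the Tonelli interchange in the presence of the nonintegrable $|t|^{-d-\alpha}$ singularity at the origin; the latter is harmless precisely because that singularity is tamed only after integrating $1-\cos\langle t,\cdot\rangle$ against it, and nonnegativity of the three pieces removes any absolute-integrability concern. With the identity from \cite{Rizzo:2005} in hand, the rest is bookkeeping.
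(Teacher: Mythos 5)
Your argument is correct and is precisely the standard proof the paper points to (the paper itself gives no proof, only the citation to \cite{Rizzo:2005}, where the identity $\int_{\mathbb{R}^d}(1-\cos\langle t,x\rangle)\,w(t;\alpha)\,dt=|x|^{\alpha}$, the expansion of $|\phi_x(t)-\phi_y(t)|^2$ into the three nonnegative pieces, and the Tonelli interchange are carried out exactly as you describe). The only care needed, which you correctly flag, is that the interchange is applied to each nonnegative term separately and the finiteness of the three resulting moments justifies recombining them.
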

For a proof see \cite{Rizzo:2005}, page 178.

The equivalence established in Lemma \ref{thm2} motivates a remarkably simple empirical divergence measure for multivariate distributions based on $U$-statistics.
Let $\X_n = \{ X_i : i = 1,\ldots,n \}$ and $\Y_m = \{ Y_j : j = 1,\ldots,m \}$ be independent iid samples from the distribution of $X, Y \in \mathbb{R}^{d},$ respectively,
such that $E|X|^{\alpha}, E|Y|^{\alpha} < \infty$ for some $\alpha \in (0,2).$ Then an empirical divergence measure analogous to Equation (\ref{EXY}) may be defined as
\be\label{eqn3}
\widehat{{\cal E}}(\X_n, \Y_m; \alpha) = \frac{2}{mn} \sum_{i = 1}^n \sum_{j = 1}^m |X_i - Y_j|^{\alpha} 
- {\binom{n}{2}}^{-1} \hspace{-10pt} \sum_{1 \le i < k \le n} \hspace{-5pt} |X_i - X_k|^{\alpha} 
- {\binom{m}{2}}^{-1} \hspace{-10pt} \sum_{1 \le j < k \le m} \hspace{-5pt} |Y_j - Y_k|^{\alpha}. 
\ee
This measure is based on Euclidean distances between sample elements and is ${\cal O}(m^2 \vee n^2)$, whereas the sample counterpart of Equation (\ref{DXY}) requires $d$-dimensional integration to evaluate. 

Under the assumptions above, $\widehat{\cal E}(\X_n,\Y_m;\alpha)\rightarrow{\cal E}(X,Y;\alpha)$
almost surely as $m \wedge n \to\infty$ by the Strong Law of Large Numbers for $U$-statistics \citep{Hoeffding:1961} and the continuity theorem.
Additionally, under the null hypothesis of equal distributions, i.e.,\ $\mathcal E(X,Y;\alpha) = 0$, we note that 
$\frac{mn}{m+n}\widehat{\cal E}(\X_n,\Y_m;\alpha)$ converges in distribution to a non-degenerate random variable 
as $m \wedge n \to\infty$.
Further, under the alternative hypothesis of unequal distributions, i.e.,\ $\mathcal E(X,Y;\alpha) > 0$, we note that 
$\frac{mn}{m+n}\widehat{\mathcal E}(\X_n,\Y_m;\alpha)\to\infty$ almost surely as $m \wedge n \to\infty$.
These asymptotic results motivate the statistical tests described in Section \ref{significance-test}.

\subsection{Estimating the Location of a Change Point}\label{finding-a-cp}

Let 
\be\label{eqn6}
\widehat{\cal Q}(\X_n,\Y_m;\alpha)=\frac{mn}{m+n}\widehat{\cal E}(\X_n,\Y_m;\alpha)
\ee
denote the scaled sample measure of divergence discussed above. This statistic leads to a consistent approach for estimating change point locations.
Let $Z_1,\dots,Z_T\in\mathbb R^d$ be an independent sequence of observations and let $1\le\tau<\kappa\le T$ be constants. Now define the following sets, $\X_\tau=\{Z_1,Z_2,\dots,Z_\tau\}$ and 
$\Y_\tau(\kappa)=\{Z_{\tau+1},Z_{\tau+2},\dots,Z_{\kappa}\}$. 
A change point location $\hat\tau$ is then estimated as
\begin{equation}\label{find-tauhat}
(\hat\tau,\hat\kappa) =\argmax_{(\tau,\kappa)}\ \widehat{\cal Q}(\X_\tau,\Y_\tau(\kappa);\alpha).
\end{equation}
%
It is possible to calculate the argmax in Equation (\ref{find-tauhat}) in ${\cal O}(T^2)$ by observing that 
$\widehat{\cal Q}(\X_\tau,\Y_\tau(\kappa);\alpha)$ can be derived directly from $\widehat{\cal Q}(\X_{\tau-1},\Y_{\tau-1}(\kappa);\alpha)$ 
and the distances $\{ |Z_{\tau}-Z_j|^\alpha : 1\le j<\tau \}$.

If it is known that at most one change point exists, we fix $\kappa = T$.
Otherwise, the variable $\kappa$ is introduced to alleviate a weakness of {bisection}, as mentioned in \cite{Venkatraman:1992},
in which it may be more difficult to detect certain types of distributional changes in the multiple change point setting using only bisection.
For example, if we fix $\kappa=T$ and the set $\Y_\tau(T)$ contains observations across multiple change points (e.g., distinct distributions), then it is possible that the resulting mixture distribution in $\Y_\tau(T)$ is indistinguishable from the distribution of the observations in $\X_\tau$, even when $\tau$ corresponds to a valid change point. 
We avoid this confounding by allowing $\kappa$ to vary, with minimal computational cost by storing the distances mentioned above.
This modification {to bisection} is similar to that taken in \cite{Olshen:2004}.

\subsection{Hierarchically Estimating Multiple Change Points}\label{finding-multiple-cp}

To estimate multiple change points we iteratively apply the above technique as follows. 
Suppose that $k-1$ change points have been estimated at locations $0<\hat\tau_1<\dots<\hat\tau_{k-1}<T.$
This partitions the observations into $k$ clusters $\widehat C_1, \widehat C_2,\dots, \widehat C_k$,
such that $\widehat C_i = \{Z_{\hat\tau_{i-1}+1},\dots,Z_{\hat\tau_{i}}\}$, in which $\hat\tau_{0} = 0$ and $\hat\tau_{k} = T$.
Given these clusters, we then apply the procedure for finding a single change point to the observations {\it within} each of 
the $k$ clusters. 
Specifically, for the $i$th cluster $\widehat C_i$ denote a proposed change point location as $\hat\tau(i)$ and the associated constant 
$\hat\kappa(i),$ as defined by Equation (\ref{find-tauhat}). 
Now, let
$$i^*=\argmax_{i \in \{1,\ldots,k\}} \ \hat{\cal Q}(\X_{\hat\tau(i)},\Y_{\hat\tau(i)}(\hat\kappa(i));\alpha),$$
in which $\X_{\hat\tau(i)}$ and $\Y_{\hat\tau(i)}(\hat\kappa(i))$ are defined with respect to $\widehat C_i$,
and denote a corresponding test statistic as
\be
\hat q_k = \hat{\cal Q}(\X_{\hat\tau_k},\Y_{\hat\tau_k}(\hat\kappa_k);\alpha),
\ee
in which $\hat\tau_k = \hat\tau(i^*)$ denotes the $k$th estimated change point,  located within cluster $\widehat C_{i^*}$,
and $\hat\kappa_k = \hat\kappa(i^*)$ the corresponding constant. 
This iterative procedure has running time ${\cal O}(kT^2)$, in which $k$ is the unknown number of change points. 

\subsection{Hierarchical Significance Testing}\label{significance-test}

The previous sections have proposed a method for estimating the locations of change points.
We now propose a testing procedure to determine the statistical significance of a change point, conditional on 
previously estimated change points. 
For hierarchical estimation, this test may be used as a stopping criterion for the proposed iterative estimation procedure. 

As above, suppose that $k-1$ change points have been estimated, resulting in $k$ clusters,
and that conditional on $\{\hat\tau_1,\dots,\hat\tau_{k-1}\}$, $\hat\tau_k$ and $\hat q_k$ are the newly proposed change point location and the associated test statistic, respectively. 
Large values of $\hat q_k$ correspond to a significant change in distribution within one of the existing clusters, however, calculating a precise critical value requires 
knowledge of the underlying distributions, which are generally unknown. 
Therefore, we propose a permutation test to determine the significance of $\hat q_k$.

Under the null hypothesis of no additional change points, we conduct a permutation test as follows. 
First, the observations \emph{within} each cluster are permuted to construct a new sequence of length $T$.
Then, we reapply the estimation procedure as described in Sections \ref{finding-a-cp} and \ref{finding-multiple-cp} to the permuted observations. 
This process is repeated and after the $r$th permutation of the observations we record the value of the test statistic
$\hat q_k^{(r)}$.  
 
This permutation test will result in an exact p-value if we consider all possible permutations.
This is not computationally tractable, in general; 
instead we obtain an approximate p-value by performing a sequence of $R$ {\it random} permutations. 
In our implementation we fix the significance level $p_0 \in (0,1)$ of the conditional test, 
as well as the the number of permutations $R$, and the approximate 
p-value is defined as $\#\{r: \hat q_k^{(r)}\ge \hat q_k\}/(R+1)$.
In our analysis we fix $p_0 = 0.05$ and use $R = 499$ permutations for all of our testing.
Determining a suitably large $R$ to obtain an adequate approximation depends on the distribution of the observations, as well as the number and size of clusters. 
As an alternative, a sequential implementation of the random permutations may be implemented with a uniformly bounded resampling risk, see \cite{gandy2009sequential}.


The permutation test may be performed at each stage in the iterative estimation algorithm. The $k$th change point is deemed significant, given $\{\hat\tau_1,\dots,\hat\tau_{k-1}\}$, 
if the approximate p-value is less 
than $p_0,$ and the procedure then estimates an additional location. Otherwise, we are unable to reject the null hypothesis of no additional change points and the algorithm 
terminates. The permutation test may be performed after the E-Divisive procedure reaches a predetermined number of clusters to quickly provide initial estimates. The independent 
calculations of the permuted observations may be performed in parallel to easily reduce computation time. 

Finally, we note that the proposed procedure is not suitable for blockwise stationary observations.  
Such an extension may be possible using the divergence measure in Equation (\ref{DXY}), 
however the resampling procedure must also consider the serial dependence structure of the observations. 

%

\section{Consistency}\label{algorithm-analysis}

We now present results pertaining to the consistency of the estimated change point locations that are returned by the proposed procedure. 
It is assumed throughout that the dimension of the observations is arbitrary, but constant, 
and that the unknown number of change points is also constant. 
Below, we consider the case of a single change point, and demonstrate that we obtain a strongly consistent estimator in a rescaled time setting. 
We then do the same for the more general case of multiple change points.

\subsection{Single Change Point}

In Section \ref{measuring-differences} we have stated that in the case of a single change point, at a given location, the two-sample test is 
statistically consistent against all alternatives. We now show that $\hat\tau$ is a strongly consistent estimator for a single change point location 
within the setting described.

\begin{asm}\label{asm1}
Suppose that we have a heterogeneous sequence of independent observations from two different distributions. 
Specifically, let $\gamma\in(0,1)$ denote the fraction of the observations belonging to one of the distributions, such that $Z_1,\dots,Z_{\lfloor{\gamma T}\rfloor}\sim F_x$ and 
$Z_{\lfloor\gamma T\rfloor+1},\dots,Z_T\sim F_y$ for every sample of size $T$. Let $r=\lfloor\gamma T\rfloor$ and $s=T-r$. 
Also, let $\mu_X^\alpha=E|X-X'|^\alpha$, $\mu_Y^\alpha=E|Y-Y'|^\alpha$, and $\mu_{XY}^\alpha=E|X-Y|^\alpha$, in which
$X, X' \stackrel{iid}{\sim} F_x,$  $Y, Y' \stackrel{iid}{\sim} F_y,$ and $X, X', Y,$ and $Y'$ are mutually independent. 
Further, suppose 
$E(|X|^{\alpha} + |Y|^{\alpha}) < \infty$ for some $\alpha\in(0,2)$; 
hence, $\mu_X^\alpha, \mu_Y^\alpha, \mu_{XY}^\alpha, \mathcal E(X,Y;\alpha)<\infty.$ Finally, let $\{\delta_T\}$ be a sequence of positive 
numbers such that $\delta_T\to 0$ and $T\delta_T\to\infty,$ as $T\to\infty$.
\end{asm}
%
\begin{lem}\label{uniform-conv}
Suppose Assumption \ref{asm1} holds, 
then 
$$\sup_{\gamma\in[\delta_T,1-\delta_T]} \left|\binom{T}{2}^{-1}\sum_{i<j}|Z_i-Z_j|^\alpha\ - \left[ \gamma^2\mu_X^\alpha+(1-\gamma)^2\mu_Y^\alpha+2\gamma(1-\gamma)\mu_{XY}^\alpha \right] \right|
\stackrel{a.s.}{\to}0, \;\; as \;\; T\rightarrow\infty. $$
\end{lem}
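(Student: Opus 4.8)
The plan is to rewrite the all-pairs average as a convex combination of three $U$-statistics formed from the underlying iid sequences, with deterministic weights converging to $\gamma^2$, $(1-\gamma)^2$ and $2\gamma(1-\gamma)$ uniformly in $\gamma$, and then to observe that over the range $\gamma\in[\delta_T,1-\delta_T]$ every one of these $U$-statistics is built from at least $\lfloor\delta_T T\rfloor\to\infty$ observations. On such a diverging range the ordinary strong law of large numbers for $U$-statistics already delivers the required uniform control, so no Glivenko--Cantelli-type argument is needed.

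In detail, realize the data through fixed, mutually independent iid sequences $X_1,X_2,\dots\sim F_x$ and $Y_1,Y_2,\dots\sim F_y$, setting $Z_i=X_i$ for $i\le r:=\lfloor\gamma T\rfloor$ and $Z_i=Y_{i-r}$ for $r<i\le T$, with $s:=T-r$; this is consistent with the ``for every sample of size $T$'' wording of Assumption \ref{asm1}. Splitting the double sum along the two blocks and using the identity $\binom{r}{2}+\binom{s}{2}+rs=\binom{T}{2}$ gives
$$\binom{T}{2}^{-1}\sum_{i<j}|Z_i-Z_j|^\alpha=\frac{\binom{r}{2}}{\binom{T}{2}}\,U^X_r+\frac{\binom{s}{2}}{\binom{T}{2}}\,U^Y_s+\frac{rs}{\binom{T}{2}}\,V_{r,s},$$
where $U^X_r=\binom{r}{2}^{-1}\sum_{1\le i<k\le r}|X_i-X_k|^\alpha$, $U^Y_s$ is the analogous one-sample $U$-statistic for the $Y$'s, and $V_{r,s}=(rs)^{-1}\sum_{i\le r,\,j\le s}|X_i-Y_j|^\alpha$. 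The three coefficients are nonnegative, sum to $1$, and, since $r=\gamma T+O(1)$ and $s=(1-\gamma)T+O(1)$ uniformly in $\gamma$, equal $\gamma^2+O(1/T)$, $(1-\gamma)^2+O(1/T)$ and $2\gamma(1-\gamma)+O(1/T)$ with the $O(1/T)$ uniform in $\gamma$. Writing each summand of the target as $(\text{coefficient})\cdot\mu$ and subtracting, the error splits into $(\text{coefficient})\cdot(U-\mu)$ plus $(\text{coefficient}-\text{limit coefficient})\cdot\mu$; bounding the coefficients by $1$ in the first part and using finiteness of $\mu^\alpha_X,\mu^\alpha_Y,\mu^\alpha_{XY}$ in the second, the quantity inside the supremum is at most
$$|U^X_r-\mu^\alpha_X|+|U^Y_s-\mu^\alpha_Y|+|V_{r,s}-\mu^\alpha_{XY}|+\frac{C}{T}\big(\mu^\alpha_X+\mu^\alpha_Y+\mu^\alpha_{XY}\big)$$
for a universal constant $C$, the last term being deterministic.

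It remains to bound the three $U$-statistic errors uniformly over $\gamma\in[\delta_T,1-\delta_T]$. On that range $r\ge m_T:=\lfloor\delta_T T\rfloor$ and $s\ge m_T$, and $m_T\to\infty$ because $T\delta_T\to\infty$; hence their sum is dominated by $\sup_{n\ge m_T}|U^X_n-\mu^\alpha_X|+\sup_{n\ge m_T}|U^Y_n-\mu^\alpha_Y|+\sup_{r\wedge s\ge m_T}|V_{r,s}-\mu^\alpha_{XY}|$. The assumption $E(|X|^\alpha+|Y|^\alpha)<\infty$ makes the kernels $|x-x'|^\alpha$, $|y-y'|^\alpha$ and $|x-y|^\alpha$ integrable (via $|a-b|^\alpha\le 2^\alpha(|a|^\alpha+|b|^\alpha)$), so by the strong law of large numbers for $U$-statistics \citep{Hoeffding:1961} and its two-sample version, as already invoked in Section \ref{measuring-differences}, $U^X_n\to\mu^\alpha_X$ and $U^Y_n\to\mu^\alpha_Y$ almost surely as $n\to\infty$, and $V_{r,s}\to\mu^\alpha_{XY}$ almost surely as $r\wedge s\to\infty$. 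Because almost sure convergence of a sequence is exactly the almost sure vanishing of its tail supremum, each of the three suprema above tends to $0$ almost surely as $m_T\to\infty$, i.e.\ as $T\to\infty$; discarding the union of the three corresponding null sets then yields the claim.

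The main obstacle is bookkeeping rather than probability: one must perform the block decomposition carefully, check that the weights converge to $\gamma^2,(1-\gamma)^2,2\gamma(1-\gamma)$ uniformly, and — the key observation — notice that restricting $\gamma$ to $[\delta_T,1-\delta_T]$ forces \emph{both} block sizes to diverge, so that ``uniform in $\gamma$'' reduces to tail-supremum convergence of three ordinary $U$-statistic sequences, which is automatic from their almost sure limits. The two-sample term $V_{r,s}$ deserves slight care since its two indices are coupled through $r+s=T$, but bounding its error by the supremum over $r\wedge s\ge m_T$ removes the coupling.
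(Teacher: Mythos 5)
Your proof is correct and follows essentially the same route as the paper's: split the all-pairs sum into the three block $U$-statistics, note that restricting $\gamma$ to $[\delta_T,1-\delta_T]$ forces every block size past $\lfloor T\delta_T\rfloor\to\infty$, apply the strong law for $U$-statistics to each block, and control the deterministic weights converging to $\gamma^2$, $(1-\gamma)^2$, $2\gamma(1-\gamma)$. Your packaging of the uniformity as tail-supremum convergence is a cleaner rendering of the paper's ``there exists $N_1$ such that the bound holds whenever $\#\Pi_1>N_1$'' step, and your explicit use of $r\wedge s\ge m_T$ for the two-sample term is slightly more careful than the paper's condition on the product $rs$, but the substance is identical.
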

\begin{proof}
Let $\epsilon>0$. Define the following disjoint sets: $\Pi_1=\{(i,j): i<j, Z_i,Z_j\sim F_x\}$; $\Pi_2=\{(i,j): Z_i\sim F_x, Z_j\sim F_y\}$; and 
$\Pi_3=\{(i,j): i<j, Z_i,Z_j\sim F_y\}$. By the Strong Law of Large Numbers for $U$-statistics, we have that with probability 1, $\exists N_1\in\mathbb N$ 
such that 
$$\left|\binom{\#\Pi_1}{2}^{-1}\sum_{\Pi_1}|Z_i-Z_j|^\alpha-\mu_X^\alpha\right|<\epsilon$$
whenever $\#\Pi_1>N_1$. By the same argument we can similarly define $N_2,N_3\in\mathbb N$. 
Furthermore, $\exists N_4\in\mathbb N$ such that $\frac{1}{T-1}<\epsilon/2$ for $T>N_4$. 
Let 
$N=N_1\vee N_2\vee N_3\vee N_4$, such that for any $T \delta_T >N$, and every $\gamma\in[\delta_T, 1-\delta_T]$, we have 
$\#\Pi_1=\lfloor\gamma T\rfloor>N_1$, $\#\Pi_2=\lfloor\gamma T\rfloor(T-\lfloor\gamma T\rfloor)>N_2$, 
$\#\Pi_3=(T-\lfloor\gamma T\rfloor)>N_3$, and the quantities $|\frac rT-\gamma|$, 
$|\frac{r-1}{T-1}-\gamma|$, $|\frac sT-(1-\gamma)|$, $|\frac{s-1}{T-1}-(1-\gamma)|$ are each less than $\epsilon$.

Now, considering the nature of the summands, $\frac{2}{T(T-1)}\sum_{\Pi_1}|Z_i-Z_j|^\alpha$ may be rewritten as  
$$\binom{r}{2}^{-1}\left(\frac{r}{T}\right)\left(\frac{r-1}{T-1}\right)\sum_{\Pi_1}|Z_i-Z_j|^\alpha.$$
For $T>N$, we have 
$$P\left(\left|\binom{r}{2}^{-1}\left(\frac{r}{T}\right)\left(\frac{r-1}{T-1}\right)\sum_{\Pi_1}|Z_i-Z_j|^\alpha-\gamma^2\mu_X^\alpha\right|<
\epsilon^3+\epsilon^2(2+3\mu_X^\alpha)+\epsilon\right)=1.$$
The last inequality is obtained from noting that $\left|\frac{r}{T}-\gamma\right|\left|\frac{r-1}{T-1}-\gamma\right|<\epsilon^2$ implies 
$\left|\left(\frac{r}{T}\right)\left(\frac{r-1}{T-1}\right)-\gamma^2\right|<\epsilon^2+2\gamma\epsilon$. Therefore, 
$\left|\left(\frac{r}{T}\right)\left(\frac{r-1}{T-1}\right)-\gamma^2\right|\left|\binom{r}{2}^{-1}\sum_{\Pi_1}|Z_i-Z_j|^\alpha-\mu_X^\alpha\right|
<\epsilon^3+2\gamma\epsilon^2;$ rearranging terms, and using the previous inequality yields 
$$\left|\binom{r}{2}^{-1}\left(\frac{r}{T}\right)\left(\frac{r-1}{T-1}\right)\sum_{\Pi_1}|Z_i-Z_j|^\alpha-\gamma^2\mu_X^\alpha\right|
<\epsilon^3+(2\gamma+(1+2\gamma)\mu_X^\alpha)\epsilon+\gamma^2\epsilon<\epsilon^3+\epsilon^2(2+3\mu_X^\alpha)+\epsilon.$$ 
By applying the same approach, we have similar expressions for both $\frac{2}{T(T-1)}\sum_{\Pi_2}|Z_i-Z_j|^\alpha$ 
and ${\frac{2}{T(T-1)}\sum_{\Pi_3}|Z_i-Z_j|^\alpha}.$
Finally, applying the triangle inequality establishes the claim, since $\epsilon$ is arbitrary.  
\end{proof}
%
%
In order to establish the uniform convergence above, it is assumed that $\gamma$ is bounded away from $0$ and $1$, such that $r \wedge s \to \infty$ as $T\to\infty$. 
In application, we impose a minimum size for each cluster when estimating the location of a change point.
This minimum cluster size should be specified \emph{a priori}; 
in our examples we primarily use 30 as the minimum size, but larger sizes may be needed when $\mathcal E(X,Y;\alpha)$ is relatively small.

\begin{thm}\label{const}
Suppose Assumption \ref{asm1} holds. 
Let $\hat\tau_T$ denote the estimated change point location for a 
sample of size $T$, as defined in Equation (\ref{find-tauhat}), here with $\kappa=T$; i.e., using an unmodified bisection approach. 
Then for $T$ large enough $\gamma\in[\delta_T,1-\delta_T]$, and furthermore, for all $\epsilon>0$
$$P\left(\lim_{T\to\infty}\left|\gamma-\frac{\hat\tau_T}{T}\right|<\epsilon\right)=1.$$
%
\end{thm}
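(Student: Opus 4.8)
The plan is to rescale the objective so that it converges, almost surely and uniformly over the admissible cut points, to a deterministic profile that is uniquely and well-separatedly maximized at $\gamma$, and then to apply the standard argmax-consistency argument. Writing $u=\tau/T$ and using $m=\tau$, $n=T-\tau$, $m+n=T$, so $\tfrac{mn}{m+n}=\tfrac{\tau(T-\tau)}{T}$, one has $T^{-1}\widehat{\cal Q}(\X_\tau,\Y_\tau(T);\alpha)=u(1-u)\,\widehat{\cal E}(\X_\tau,\Y_\tau(T);\alpha)$, so it suffices to control $u(1-u)\,\widehat{\cal E}(\X_\tau,\Y_\tau(T);\alpha)$ uniformly over $u\in[\delta_T,1-\delta_T]$.

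First I would establish the uniform limit of the rescaled objective. Take $\tau\le r=\lfloor\gamma T\rfloor$ (the case $\tau\ge r$ is symmetric). Then $\X_\tau$ is a prefix of the $F_x$-block, while $\Y_\tau(T)$ consists of the remaining $r-\tau$ observations from $F_x$ together with all $s$ observations from $F_y$; since $\widehat{\cal E}$ ignores order, $\Y_\tau(T)$ behaves like a sample from the mixture $G_u:=\tfrac{\gamma-u}{1-u}F_x+\tfrac{1-\gamma}{1-u}F_y$. Splitting each of the three $U$-statistic components of $\widehat{\cal E}$ according to the distributional membership of the indices and applying the Strong Law of Large Numbers for $U$-statistics to each homogeneous block — following the proof of Lemma \ref{uniform-conv}, including its bookkeeping for the block-count ratios — gives, uniformly over $u\in[\delta_T,1-\delta_T]$,
$$\widehat{\cal E}(\X_\tau,\Y_\tau(T);\alpha)\ \stackrel{a.s.}{\to}\ \mathcal E_\gamma(u)\ \ge\ 0,$$
where $\mathcal E_\gamma(u)$ denotes $\mathcal E(X,Y;\alpha)$ evaluated with $X\sim F_x$ and $Y\sim G_u$ (nonnegative by Lemma \ref{thm2}); hence $T^{-1}\widehat{\cal Q}(\X_\tau,\Y_\tau(T);\alpha)\stackrel{a.s.}{\to}g(u):=u(1-u)\,\mathcal E_\gamma(u)$ uniformly. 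Expanding $\mathcal E_\gamma(u)$ in terms of $\mu_X^\alpha,\mu_Y^\alpha,\mu_{XY}^\alpha$ and simplifying, the profile collapses to
$$g(u)=\begin{cases}\dfrac{u\,(1-\gamma)^2}{1-u}\ \mathcal E(X,Y;\alpha),&0\le u\le\gamma,\\[3mm]\dfrac{(1-u)\,\gamma^2}{u}\ \mathcal E(X,Y;\alpha),&\gamma\le u\le1.\end{cases}$$

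Next I would identify and separate the maximizer. By Lemma \ref{thm2}, $F_x\ne F_y$ forces $\mathcal E(X,Y;\alpha)>0$, so $g$ is continuous on $[0,1]$ with $g(0)=g(1)=0$, strictly increasing on $[0,\gamma]$ (since $u\mapsto u/(1-u)$ is) and strictly decreasing on $[\gamma,1]$ (since $u\mapsto(1-u)/u$ is); thus $g$ is uniquely maximized at $\gamma$, with $g(\gamma)=\gamma(1-\gamma)\,\mathcal E(X,Y;\alpha)>0$, and the maximizer is well separated: for every $\epsilon>0$ there is $c_\epsilon>0$ with $g(u)\le g(\gamma)-c_\epsilon$ whenever $u\in[0,1]$ and $|u-\gamma|\ge\epsilon$. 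To finish, fix $\epsilon>0$; since $\delta_T\to0$, for $T$ large one has $\gamma\in(\delta_T,1-\delta_T)$, so $\hat\tau_T/T\in[\delta_T,1-\delta_T]$ and some admissible $\tau$ lies within $1/T$ of $\gamma$. On the probability-one event on which the uniform convergence above holds, $T^{-1}\widehat{\cal Q}$ at $\hat\tau_T$ is at least its value at that near-$\gamma$ point, which is $g(\gamma)-o(1)$, and at most $g(\hat\tau_T/T)+o(1)$; hence $g(\hat\tau_T/T)\ge g(\gamma)-o(1)$, which by well-separation forces $|\hat\tau_T/T-\gamma|<\epsilon$ for all large $T$. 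Letting $\epsilon\downarrow0$ gives $\hat\tau_T/T\to\gamma$ on this event, so $P\big(\lim_{T\to\infty}|\gamma-\hat\tau_T/T|<\epsilon\big)=1$ for every $\epsilon>0$; and $\gamma\in[\delta_T,1-\delta_T]$ for $T$ large is immediate from $\delta_T\to0$.

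The hard part will be the \emph{uniformity} of the convergence in the first step near $u=\gamma$, where the homogeneous sub-block $\{Z_{\tau+1},\dots,Z_r\}$ (or $\{Z_{r+1},\dots,Z_\tau\}$) has size $|r-\tau|=o(T)$ and the $U$-statistic SLLN cannot be applied to it uniformly. The way around this is that such a sub-block enters the cross and within-sample sums only through an $O(|r-\tau|/T)$ fraction of the terms; using $|a-b|^\alpha\le 2(|a|^\alpha+|b|^\alpha)$ together with the almost-sure boundedness of the block averages $\tfrac1r\sum_{i\le r}|Z_i|^\alpha$ and $\tfrac1s\sum_{r<j\le T}|Z_j|^\alpha$, its total contribution is uniformly $o(1)$, so one bounds it rather than proving it converges; the remaining full-size sums converge uniformly by the Lemma \ref{uniform-conv} argument, whose ratio-bookkeeping also handles the $\tfrac{\tau}{T},\tfrac{\tau-1}{T-1}$-type factors over $[\delta_T,1-\delta_T]$.
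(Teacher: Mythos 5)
Your proposal is correct and follows essentially the same route as the paper: rescale $\widehat{\cal Q}$ by $T^{-1}$, establish almost-sure uniform convergence over the candidate split fraction to the deterministic profile $g(u)$ (your piecewise formula agrees exactly with the paper's $\tilde\gamma(1-\tilde\gamma)h(\tilde\gamma;\gamma)\mathcal E(X,Y;\alpha)$), note that $g$ is uniquely and well-separatedly maximized at $\gamma$, and conclude by the standard argmax-consistency argument. Your derivation of the limit via the mixture characteristic function and your explicit treatment of the $o(T)$-sized homogeneous sub-block near $u=\gamma$ are, if anything, more careful than the paper's appeal to Lemma \ref{uniform-conv}, but they establish the same uniform limit.
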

%
\begin{proof}
Let $T$ be such that $\gamma\in[\delta_T,1-\delta_T]$, then 
for any $\tilde\gamma\in[\delta_T,1-\delta_T]$, let $\X_T(\tilde\gamma)=\{Z_1,\dots,Z_{\lfloor\tilde\gamma T\rfloor}\}$ and $\Y_T(\tilde\gamma)=
\{Z_{\lfloor\tilde\gamma T\rfloor+1},\dots,Z_T\}$ for all $T$. Then
\begin{equation}\label{as-sing}
\widehat{\mathcal E}(\X_T(\tilde\gamma),\Y_T(\tilde\gamma);\alpha)\stackrel{a.s.}{\to}\left(\frac{\gamma}{\tilde\gamma}\mathbbm{1}_{\tilde\gamma\ge\gamma}+
\frac{1-\gamma}{1-\tilde\gamma}\mathbbm{1}_{\tilde\gamma<\gamma}\right)^2\mathcal E(X,Y;\alpha)=h(\tilde\gamma;\gamma)\mathcal E(X,Y;\alpha)
\end{equation}
as $T\to\infty$, uniformly in $\tilde\gamma$. The 
maximum of $h(\tilde\gamma;\gamma)$ 
is attained when $\tilde\gamma=\gamma$.
Now, note that $\frac{1}{T}\widehat{\mathcal Q}(\X_T(\tilde\gamma),\Y_T(\tilde\gamma);\alpha)\stackrel{a.s.}{\to}\tilde\gamma(1-\tilde\gamma)h(\tilde\gamma;\gamma)\mathcal E(X,Y;\alpha)$ 
as $T\to\infty$, uniformly in $\tilde\gamma$. Additionally, the maximum value of $\tilde\gamma(1-\tilde\gamma)h(\tilde\gamma;\gamma)$ is also attained when $\tilde\gamma=\gamma$.
Define
$$ \hat\tau_T=\argmax_{\tau \in\left\{{\lceil T\delta_T\rceil},{\lceil T\delta_T\rceil+1},\dots,{\lfloor T(1-\delta_T)\rfloor}\right\}}
\widehat{\mathcal Q}(\X_\tau,\Y_\tau(T);\alpha),$$
and the interval $\displaystyle\hat\Gamma_T=\argmax_{\tilde\gamma\in[\delta_T,1-\delta_T]}\widehat{\mathcal Q}(\X_T(\tilde\gamma),\Y_T(\tilde\gamma);\alpha)$, 
then 
$\frac{\hat\tau_T}{T}\in\hat\Gamma_T$. Since
\begin{equation*}
\frac{1}{T}\widehat{\mathcal Q}\left(\X_T\left({\hat\tau_T}/{T}\right),\Y_T\left({\hat\tau_T}/{T}\right);\alpha\right)>\frac{1}{T}\widehat{\mathcal Q}(\X_T(\gamma),
\Y_T(\gamma);\alpha)-o(1),
\end{equation*}
we have
\begin{equation*}
\frac{1}{T}\widehat{\mathcal Q}(\X_T(\hat\tau_T/T),\Y_T(\hat\tau_T/T);\alpha)\ge\gamma(1-\gamma)h(\gamma;\gamma)\mathcal E(X,Y;\alpha)-o(1),
\end{equation*}
by the almost sure uniform convergence. Letting $\hat\gamma=\hat\tau_T/T$, it follows that
\begin{eqnarray*}
0 \le \gamma(1-\gamma)h(\gamma;\gamma)\mathcal E(X,Y;\alpha)-\hat\gamma(1-\hat\gamma)h(\hat\gamma;\gamma)\mathcal E(X,Y;\alpha)&\le&
\frac{1}{T}\widehat{\mathcal Q}(\X_T(\hat\gamma),\Y(\hat\gamma);\alpha)+o(1)\\
&&-\hat\gamma(1-\hat\gamma)h(\hat\gamma;\gamma)\mathcal E(X,Y;\alpha) \\
&\rightarrow& 0, 
\end{eqnarray*}
\vspace{-12pt}
as $T \rightarrow \infty.$ For every $\epsilon>0$, there exists $\eta$ such that 
$$\tilde\gamma(1-\tilde\gamma)h(\tilde\gamma;\gamma)\mathcal E(X,Y;\alpha)<\gamma(1-\gamma)h(\gamma;\gamma)\mathcal E(X,Y;\alpha)-\eta$$
for all $\tilde\gamma$ with $|\tilde\gamma-\gamma|\ge\epsilon$. Therefore,
\begin{eqnarray*}
P\left(\lim_{T\to\infty}|\hat\gamma_T-\gamma|\ge\epsilon\right)&\le&P\left(\lim_{T\to\infty}
\hat\gamma_T(1-\hat\gamma_T)h(\hat\gamma_T;\gamma)\mathcal E(X,Y;\alpha)<\gamma(1-\gamma)h(\gamma;\gamma)\mathcal E(X,Y;\alpha)-\eta\right)\\
&=&0.   \hspace{11.3cm}  
 \end{eqnarray*}
\end{proof}
Consistency only requires that each cluster's size increase, but not necessarily at the same rate. 
To consider rates of convergence, additional information about the distribution of the estimators, which depends on the unknown distributions of the data, is also necessary.
%

\subsection{Multiple Change Points}\label{multi-cp}

The consistency result presented in \cite{Vostrikova:1981} cannot be applied in this general situation because it assumes that the expectation of 
the observed sequence consists of a piecewise linear function, making it only suitable for estimating change points resulting from breaks in expectation.

\begin{asm}\label{asm2}
Suppose that we have a heterogeneous sequence of independent observations from $k+1$ distributions, denoted $\{F_i\}_{i=0}^{k}$. Specifically, let 
$0=\gamma^{(0)}<\gamma^{(1)}<\cdots<\gamma^{(k)}<\gamma^{(k+1)}=1$. 
Then, for $i=0,1,\dots,k$ we have 
$Z_{\lfloor T\gamma^{(i)}\rfloor+1},\dots,Z_{\lfloor T\gamma^{(i+1)}\rfloor} \stackrel{iid}{\sim} F_i$, such that $F_i\neq F_{i+1}$.
Let $\mu_{ii}^\alpha=E|X_i-X_i'|^\alpha$ and $\mu_{ij}^\alpha=E|X_i-X_j|^\alpha$, in which $X_i,X_i'\stackrel{iid}{\sim}F_i$, independent of $X_j\sim F_j$.
Furthermore, suppose that $\displaystyle\sum_{i=0}^{k}E|X_i|^\alpha<\infty$ for some $\alpha\in(0,2)$;
hence $\mu_{ii}^\alpha$, $\mu_{ij}^\alpha, \mathcal E(X_i,X_j;\alpha)<\infty$, for all $i$ and $j$. Let $\{\delta_T\}$ be a 
sequence of positive numbers such that $\delta_T\to 0$ and $T\delta_T\to\infty,$ as $T\to\infty.$
\end{asm}

Under Assumption \ref{asm2}, analysis of multiple change points can be reduced to the analysis of only two change points. 
For any $i\in\{1,\dots,k-1\}$, consider $\gamma^{(i)}$ and $\gamma^{(i+1)}$. 
The observations $\{Z_j: j\le\lfloor T\gamma^{(i)}\rfloor\}$ can be seen as a random sample from a mixture of distributions $\{F_j: j\le i\},$ denoted here as $F$. Similarly, observations $\{Z_j: j\ge\lfloor T\gamma^{(i+1)}\rfloor+1\}$ are a sample from a mixture of distributions $\{F_j : j>i+1\}$, denoted here as $H$. The remaining observations are distributed according to some distribution $G$. Furthermore, $F \neq G$ and $G \neq H$, if not, we refer to the single change point setting. For notation, we simply consider $\gamma^{(1)}$ and $\gamma^{(2)}$.

Let $X, Y, U$ be random variables such that $X\sim F$, $Y\sim H$, and $U\sim G$.
Consider any $\tilde\gamma$ such that, 
$\gamma^{(1)}\le\tilde\gamma\le\gamma^{(2)}$, then this choice of $\tilde\gamma$ will create two mixture distributions. {One with component 
distributions $F$ and $G$, and the other with component distributions $H$ and $G$. Then the divergence measure in Equation (\ref{DXY}) between these two mixture distributions} 
 is equal to
\begin{equation}\label{mix-meas}
\int_{\mathbb R^d}\bigg|\frac{\gamma^{(1)}}{\tilde\gamma}\phi_x(t)+\left(\frac{\tilde\gamma-\gamma^{(1)}}{\tilde\gamma}\right)\phi_u(t)-
\left(\frac{1-\gamma^{(2)}}{1-\tilde\gamma}\right)\phi_y(t)-\left(\frac{\gamma^{(2)}-\tilde\gamma}{1-\tilde\gamma}\right)
\phi_u(t)\bigg|^2 \, w(t ;\alpha ) \, dt
\end{equation}
%
\begin{lem}\label{right-choice}
Suppose that Assumption \ref{asm2} holds for some $\alpha\in(0,2)$, then the divergence measure in Equation (\ref{mix-meas}) is maximized 
when either $\tilde\gamma=\gamma^{(1)}$ or $\tilde\gamma=\gamma^{(2)}$.
\end{lem}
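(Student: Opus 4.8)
\textit{Proof proposal.} The first step is purely algebraic: in the integrand of \eqref{mix-meas} one collects the two $\phi_u$ terms. The four coefficients $\tfrac{\gamma^{(1)}}{\tilde\gamma},\ \tfrac{\tilde\gamma-\gamma^{(1)}}{\tilde\gamma},\ -\tfrac{1-\gamma^{(2)}}{1-\tilde\gamma},\ -\tfrac{\gamma^{(2)}-\tilde\gamma}{1-\tilde\gamma}$ sum to $0$, so the bracketed expression equals
$$
A_t(\tilde\gamma)\;=\;\frac{\gamma^{(1)}}{\tilde\gamma}\bigl(\phi_x(t)-\phi_u(t)\bigr)\;-\;\frac{1-\gamma^{(2)}}{1-\tilde\gamma}\bigl(\phi_y(t)-\phi_u(t)\bigr).
$$
Putting $v_1=\phi_x-\phi_u$, $v_2=\phi_y-\phi_u$, which lie in the Hilbert space $\mathcal H=L^2\!\bigl(w(\cdot;\alpha)\,dt\bigr)$ by Lemma~\ref{thm2} (finiteness of the divergences involved), \eqref{mix-meas} equals $\mathcal D(\tilde\gamma):=\bigl\|\tfrac{\gamma^{(1)}}{\tilde\gamma}v_1-\tfrac{1-\gamma^{(2)}}{1-\tilde\gamma}v_2\bigr\|_{\mathcal H}^2$. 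Expanding and using Lemma~\ref{thm2} together with the polarization identity $\langle v_1,v_2\rangle_{\mathcal H}=\tfrac12\bigl(\mathcal E(X,U;\alpha)+\mathcal E(U,Y;\alpha)-\mathcal E(X,Y;\alpha)\bigr)$ gives $\mathcal D(\tilde\gamma)=\tfrac{P}{\tilde\gamma^2}+\tfrac{Q}{(1-\tilde\gamma)^2}+\tfrac{R}{\tilde\gamma(1-\tilde\gamma)}$, where $P=(\gamma^{(1)})^2\mathcal E(X,U;\alpha)>0$, $Q=(1-\gamma^{(2)})^2\mathcal E(U,Y;\alpha)>0$ (using $F\neq G$, $G\neq H$), and $R=\gamma^{(1)}(1-\gamma^{(2)})\bigl(\mathcal E(X,Y;\alpha)-\mathcal E(X,U;\alpha)-\mathcal E(U,Y;\alpha)\bigr)$ with $|R|\le2\sqrt{PQ}$ by Cauchy--Schwarz.

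Since $\mathcal D$ is real-analytic on $(0,1)$, the claim is equivalent to: $\mathcal D$ has no local maximum in $(\gamma^{(1)},\gamma^{(2)})$. I would split on the sign of the cross term. If $\langle v_1,v_2\rangle_{\mathcal H}\le0$ (equivalently $R\ge0$), write $v_2=\|v_2\|(\cos\theta\,u_1+\sin\theta\,u_2)$ with $u_1=v_1/\|v_1\|$ and $u_1\perp u_2$; then $\cos\theta\le0$ and
$$
\mathcal D(\tilde\gamma)=\Bigl(\tfrac{\gamma^{(1)}}{\tilde\gamma}\|v_1\|+\tfrac{1-\gamma^{(2)}}{1-\tilde\gamma}\|v_2\|\,|\cos\theta|\Bigr)^{2}+\Bigl(\tfrac{1-\gamma^{(2)}}{1-\tilde\gamma}\Bigr)^{2}\|v_2\|^{2}\sin^{2}\theta .
$$
The quantity inside the first square is a nonnegative combination of the convex functions $\tilde\gamma\mapsto\tilde\gamma^{-1}$ and $\tilde\gamma\mapsto(1-\tilde\gamma)^{-1}$ on $(0,1)$, hence convex and nonnegative, so its square is convex; the second term is convex as well. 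Thus $\mathcal D$ is convex on $(0,1)$ and attains its maximum over $[\gamma^{(1)},\gamma^{(2)}]$ at an endpoint.

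The case $\langle v_1,v_2\rangle_{\mathcal H}>0$ ($R<0$) is the substantive one. Here I would use $N(\tilde\gamma):=\tilde\gamma^{2}(1-\tilde\gamma)^{2}\mathcal D(\tilde\gamma)=\bigl(\sqrt P(1-\tilde\gamma)-\sqrt Q\,\tilde\gamma\bigr)^{2}+\varepsilon\,\tilde\gamma(1-\tilde\gamma)$ with $\varepsilon:=2\sqrt{PQ}-|R|\ge0$, which, writing $\xi:=\sqrt P/(\sqrt P+\sqrt Q)\in(0,1)$, yields
$$
\mathcal D(\tilde\gamma)=(\sqrt P+\sqrt Q)^{2}\Bigl(\frac{1-\xi}{1-\tilde\gamma}-\frac{\xi}{\tilde\gamma}\Bigr)^{2}+\frac{\varepsilon}{\tilde\gamma(1-\tilde\gamma)} .
$$
The first summand is the square of a function strictly increasing from $-\infty$ to $+\infty$ (vanishing at $\tilde\gamma=\xi$), hence itself decreasing-then-increasing with a unique minimum; the second summand is convex. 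To exclude an interior maximizer I would track the sign of $\mathcal D'$, which on $(0,1)$ equals the sign of the cubic $M(\tilde\gamma)=N'(\tilde\gamma)\tilde\gamma(1-\tilde\gamma)-2N(\tilde\gamma)(1-2\tilde\gamma)$; one checks $M(0)=-2P\le0$ and $M(1)=2Q\ge0$, so a local maximum of $\mathcal D$ would force $M$ to pass from $+$ to $-$ inside $[\gamma^{(1)},\gamma^{(2)}]$, i.e. $M$ to change sign three times there. But $M$ has positive leading coefficient (it is $2\bigl[(\sqrt P+\sqrt Q)^2-\varepsilon\bigr]$, positive since $\varepsilon\le2\sqrt{PQ}\le\tfrac12(\sqrt P+\sqrt Q)^2$), so $M'$ is an upward parabola and $M$ is ``increase--decrease--increase''; combining this shape with $M(0)\le0\le M(1)$ and the constraint $\varepsilon\le2\sqrt{PQ}$ — which keeps the $\varepsilon\,\tilde\gamma(1-\tilde\gamma)$ perturbation small relative to the dominant rank-one term — rules out the $+\to-$ crossing. \emph{This last estimate is the main obstacle}; everything before it is routine. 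Heuristically it must hold because at $\tilde\gamma=\gamma^{(1)}$ the coefficient $\gamma^{(1)}/\tilde\gamma$ equals $1$, so $\mathcal D(\gamma^{(1)})$ retains the full vector $v_1$, and symmetrically $\mathcal D(\gamma^{(2)})$ retains the full vector $v_2$, whereas for interior $\tilde\gamma$ both coefficients drop strictly below $1$, pulling $\tfrac{\gamma^{(1)}}{\tilde\gamma}v_1-\tfrac{1-\gamma^{(2)}}{1-\tilde\gamma}v_2$ toward the origin — so the endpoint values dominate.
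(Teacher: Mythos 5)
Your reduction is sound and is in fact more careful than the paper's own argument: the printed proof expands the square in Equation (\ref{mix-meas2}) into three integrals and writes the cross term as $\int|\phi_x-\phi_u|\,|\phi_u-\phi_y|\,w\,dt$, which is nonnegative by fiat, whereas the true cross term is $\int\mathrm{Re}\bigl[(\phi_x-\phi_u)\overline{(\phi_u-\phi_y)}\bigr]\,w\,dt$ and can be negative --- exactly your case $R<0$. Your Case 1 ($R\ge0$) is the paper's convexity argument done correctly. But Case 2 is left open, and, as you acknowledge yourself, the step that would close it is missing: knowing that $M$ is a cubic with positive leading coefficient satisfying $M(0)\le 0\le M(1)$ does \emph{not} preclude three roots in $(0,1)$, i.e.\ a $+\to-$ crossing of $\mathcal D'$; some quantitative use of $|R|\le 2\sqrt{PQ}$ is unavoidable, and the closing heuristic about coefficients dropping below $1$ is not a proof. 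As it stands, the hard case is unproved.

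The missing estimate is short, so let me record it. Writing $N(\tilde\gamma)=A\tilde\gamma^2+B\tilde\gamma+C$ with $A=P+Q-R$, $B=R-2P$, $C=P$, your cubic is $M(\tilde\gamma)=2A\tilde\gamma^3+3B\tilde\gamma^2+(4C-B)\tilde\gamma-2C$, so $M'(\tilde\gamma)=6A\tilde\gamma^2+6B\tilde\gamma+(4C-B)$ has discriminant $36B^2-24A(4C-B)=12\bigl[3B^2-2A(4C-B)\bigr]=12\bigl[R^2+2R(P+Q)-12PQ\bigr]$. The bracket is a convex quadratic in $R$ whose values at $R=0$ and $R=-2\sqrt{PQ}$ are $-12PQ<0$ and $-8PQ-4\sqrt{PQ}(P+Q)<0$ respectively, hence it is negative on all of $[-2\sqrt{PQ},0]$. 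Therefore $M'>0$ everywhere, $M$ is strictly increasing with $M(0)<0<M(1)$, and $\mathcal D'$ changes sign exactly once on $(0,1)$, from negative to positive: $\mathcal D$ is strictly decreasing then strictly increasing and so attains its maximum over $[\gamma^{(1)},\gamma^{(2)}]$ at an endpoint. With this inserted your two-case argument becomes a complete proof (one that, unlike the printed one, handles the sign of the cross term correctly); without it, the substantive half of the lemma is not established.
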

%
\begin{proof}
Equation (\ref{mix-meas}) can be rewritten as
\begin{equation}\label{mix-meas2}
f(\tilde\gamma)=\int_{\mathbb R^d}\bigg|\frac{\gamma^{(1)}}{\tilde\gamma}[\phi_x(t)-\phi_u(t)]+\frac{1-\gamma^{(2)}}{1-\tilde\gamma}[\phi_u(t)-\phi_y(t)]\bigg|^2\, w(t ;\alpha ) \, dt.
\end{equation}
We then express the above integral as the sum of the following three integrals:
\begin{equation*}
\displaystyle\left(\frac{\gamma^{(1)}}{\tilde\gamma}\right)^2\int_{\mathbb R^d}|\phi_x(t)-\phi_u(t)|^2\, w(t ;\alpha ) \, dt;
\end{equation*}
\begin{equation*}
\displaystyle\frac{2\gamma^{(1)}(1-\gamma^{(2)})}{\gamma(1-\tilde\gamma)}\int_{\mathbb R^d}|\phi_x(t)-\phi_u(t)||\phi_u(t)-\phi_y(t)|\, w(t ;\alpha ) \, dt; \quad \mbox{and}
\end{equation*}
\begin{equation*}
\displaystyle\left(\frac{1-\gamma^{(2)}}{1-\tilde\gamma}\right)^2\int_{\mathbb R^d}|\phi_u(t)-\phi_y(t)|^2\, w(t ;\alpha ) \, dt.
\end{equation*}
Each of these is a strictly convex positive function of $\tilde\gamma$, and therefore so is their sum. Since $\gamma^{(1)}\le\tilde\gamma
\le\gamma^{(2)}$, the maximum value is attained when either $\tilde\gamma=\gamma^{(1)}$ or $\tilde\gamma=\gamma^{(2)}$.
\end{proof}
%
\begin{lem}\label{multi-uniform}
%
Suppose that Assumption \ref{asm2} holds for some $\alpha\in(0,2)$, then 
$$\sup_{\tilde\gamma \in [\gamma^{(1)}, \gamma^{(2)}]} \left|\widehat{\mathcal E}(\X_T(\tilde\gamma),\Y_T(\tilde\gamma);\alpha) -  f(\tilde\gamma) \right|
\stackrel{a.s.}{\to}0, \;\; as \;\; T\rightarrow\infty. $$
\end{lem}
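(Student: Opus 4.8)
The plan is to reduce the claim to the Strong Law of Large Numbers for $U$-statistics applied blockwise, and then to upgrade the resulting pointwise convergence to a uniform one by exploiting that only finitely many underlying $U$-statistic sequences are involved, exactly as in the proof of Lemma \ref{uniform-conv}.

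\emph{Decomposition.} Following the paragraph preceding Equation (\ref{mix-meas}), relabel the three relevant distributions as $F$ (the mixture of $\{F_j: j\le 1\}$), $G$, and $H$, and write $a=\lfloor\gamma^{(1)}T\rfloor$, $b=\lfloor\gamma^{(2)}T\rfloor$, $n=\lfloor\tilde\gamma T\rfloor$, $m=T-n$. For $\tilde\gamma\in[\gamma^{(1)},\gamma^{(2)}]$ the set $\X_T(\tilde\gamma)$ consists of the $a$ observations $Z_1,\dots,Z_a$ (iid $F$) together with the first $n-a$ of the $G$-observations $Z_{a+1},\dots,Z_b$, while $\Y_T(\tilde\gamma)$ consists of the remaining $b-n$ of those $G$-observations together with the $T-b$ observations $Z_{b+1},\dots,Z_T$ (iid $H$); crucially, as $\tilde\gamma$ increases these sub-samples grow monotonically as prefixes/suffixes of fixed iid sequences. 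Expanding $\widehat{\mathcal E}$ from Equation (\ref{eqn3}) and splitting each of its three sums according to the provenance ($F$, $G$, or $H$) of the indices writes $\widehat{\mathcal E}(\X_T(\tilde\gamma),\Y_T(\tilde\gamma);\alpha)$ as a finite sum of terms, each of the form (ratio of counts)$\times$(one- or two-sample $U$-statistic of $|\cdot|^\alpha$ over the relevant iid subsequence(s)).

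\emph{Pointwise limit and identification of $f$.} For fixed $\tilde\gamma$ in the open interval $(\gamma^{(1)},\gamma^{(2)})$, each count ratio converges to the corresponding product of the mixing weights $\gamma^{(1)}/\tilde\gamma$, $(\tilde\gamma-\gamma^{(1)})/\tilde\gamma$, $(\gamma^{(2)}-\tilde\gamma)/(1-\tilde\gamma)$, $(1-\gamma^{(2)})/(1-\tilde\gamma)$, and, by the Strong Law of Large Numbers for $U$-statistics \citep{Hoeffding:1961}, each block $U$-statistic converges a.s.\ to the appropriate $\mu_{ij}^\alpha$. Collecting terms, the a.s.\ limit is $2E|W_1-W_2|^\alpha-E|W_1-W_1'|^\alpha-E|W_2-W_2'|^\alpha=\mathcal E(M_1,M_2;\alpha)$, where $M_1=\frac{\gamma^{(1)}}{\tilde\gamma}F+\frac{\tilde\gamma-\gamma^{(1)}}{\tilde\gamma}G$ and $M_2=\frac{1-\gamma^{(2)}}{1-\tilde\gamma}H+\frac{\gamma^{(2)}-\tilde\gamma}{1-\tilde\gamma}G$ are exactly the two mixtures in Equation (\ref{mix-meas}) (both have finite $\alpha$-moments under Assumption \ref{asm2}); by Lemma \ref{thm2} this equals the integral $f(\tilde\gamma)$ of Equation (\ref{mix-meas2}).

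\emph{Uniformity.} Fix $\epsilon>0$. The count ratios converge to their limits uniformly on $[\gamma^{(1)},\gamma^{(2)}]$, since $|\lfloor\tilde\gamma T\rfloor/T-\tilde\gamma|\le 1/T$ and the denominators $\tilde\gamma$ and $1-\tilde\gamma$ are bounded away from $0$ there. For the block $U$-statistics: each of the finitely many underlying iid sequences yields, by the $U$-statistic SLLN, a convergent running $U$-statistic, so on a probability-one event there is a (random) $N_\epsilon$ with every such running $U$-statistic on at least $N_\epsilon$ observations within $\epsilon$ of its limit. For $T$ large, a block fails to reach $N_\epsilon$ observations only when $\tilde\gamma$ lies within $N_\epsilon/T$ of $\gamma^{(1)}$ or of $\gamma^{(2)}$; in that regime the offending block's coefficient is $O(N_\epsilon/T)$, and — using $|x-y|^\alpha\le c_\alpha(|x|^\alpha+|y|^\alpha)$ together with the a.s.\ convergence of $\frac{1}{\ell}\sum_{i\le\ell}|Z_i|^\alpha$ along each sequence — the product of that coefficient with the corresponding (possibly large) block sum is $O(1/T)$, hence uniformly negligible. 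Combining the two regimes via the triangle inequality and letting $\epsilon\downarrow 0$ establishes the claim.

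The main obstacle is precisely this endpoint behavior: near $\tilde\gamma=\gamma^{(1)}$ or $\tilde\gamma=\gamma^{(2)}$ one mixture component has vanishing weight, so the associated block $U$-statistic need not be close to its limit, and one must check that its vanishing coefficient suppresses it uniformly; the rest is the bookkeeping already carried out for Lemma \ref{uniform-conv}.
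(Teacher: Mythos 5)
Your proof is correct and follows essentially the same route as the paper's: decompose $\widehat{\mathcal E}(\X_T(\tilde\gamma),\Y_T(\tilde\gamma);\alpha)$ into blockwise $U$-statistics weighted by count ratios, apply the SLLN for $U$-statistics together with uniform control of the ratios (the paper packages this step as an appeal to Lemma \ref{uniform-conv}), and identify the limit with $f(\tilde\gamma)$ via the equivalence in Lemma \ref{thm2}. The one substantive difference is that you explicitly handle the degeneracy at $\tilde\gamma$ near $\gamma^{(1)}$ or $\gamma^{(2)}$, where one mixture component has $o(T)$ observations and the cited SLLN does not directly apply --- a point the paper's citation of Lemma \ref{uniform-conv} (which assumes proportions bounded below by $\delta_T$) leaves implicit --- and your vanishing-coefficient bound correctly closes that gap.
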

\begin{proof}
Let $p(\tilde\gamma;\gamma)=\frac{\gamma^{(1)}}{\tilde\gamma}$ and $q(\tilde\gamma;\gamma)=\frac{1-\gamma^{(2)}}{1-\tilde\gamma}$. 
Using methods from the proof of Lemma \ref{thm2}, Equation (\ref{mix-meas2}) is equal to
\begin{eqnarray*}
p(\tilde\gamma;\gamma)^2\mathcal E(X,U;\alpha) &+& q(\tilde\gamma;\gamma)^2\mathcal E(Z,U;\alpha) \\
&+& 2pq(\tilde\gamma;\gamma)\left(E|X-U|^\alpha+E|Y-U|^\alpha-E|X-Y|^\alpha-E|U-U'|^\alpha\right).
\end{eqnarray*}
Since $\min\!\!\left(\frac{\gamma^{(1)}}{\gamma^{(2)}},\frac{1-\gamma^{(2)}}{1-\gamma^{(1)}}\right)>0$, by Lemma \ref{uniform-conv} the within distances 
for $X_T(\tilde\gamma)$ and $Y_T(\tilde\gamma)$ converge uniformly to 
$$p(\tilde\gamma;\gamma)^2E|X-X'|^\alpha+(1-p(\tilde\gamma;\gamma))^2E|U-U'|^\alpha+2p(\tilde\gamma;\gamma)(1-p(\tilde\gamma;\gamma))E|X-U|^\alpha \quad \mathrm{and}$$
$$q(\tilde\gamma;\gamma)^2E|Y-Y'|^\alpha+(1-q(\tilde\gamma;\gamma))^2E|U-U'|^\alpha+2q(\tilde\gamma;\gamma)(1-q(\tilde\gamma;\gamma))E|Y-U|^\alpha,$$
respectively. Similarly, it can be shown that the between distance converges uniformly to 
\begin{eqnarray*}
& & pq(\tilde\gamma;\gamma)E|X-Y|^\alpha 
+ p(\tilde\gamma;\gamma)(1-q(\tilde\gamma;\gamma))E|X-U|^\alpha + \\
& & (1-p(\tilde\gamma;\gamma))(1-q(\tilde\gamma;\gamma))E|U-U'|^\alpha 
+ (1-p(\tilde\gamma;\gamma))q(\tilde\gamma;\gamma)E|Y-U|^\alpha.
\end{eqnarray*}
Combining twice the between less the within distances provides the desired quantity.
\end{proof}

Under Assumption \ref{asm2}, for each $i = 0,1,\dots,k,$ there exist distributions $F_i$, $G_i$, and $H_i$ 
such that for $\gamma^{(i)}\le\tilde\gamma\le\gamma^{(i+1)}$, Equation (\ref{mix-meas2}) holds; otherwise $f_i(\tilde\gamma) = 0$. 
By Lemmas \ref{right-choice} and \ref{multi-uniform}, $f_i(\tilde\gamma)$ is maximized when $\tilde\gamma=\gamma^{(i)}$ or 
$\tilde\gamma=\gamma^{(i+1)}$ for $i = 1,2,\dots,k-1$. 
By Theorem \ref{const}, $f_0(\tilde\gamma)$ and $f_k(\tilde\gamma)$ are maximized at $\gamma^{(1)}$ 
and $\gamma^{(k)}$, respectively.
%
\begin{thm}\label{multi-const}
Suppose that Assumption \ref{asm2} holds for some $\alpha\in(0,2)$. 
For $\mathscr A_T\subset (\delta_T, 1-\delta_T)$ and $x\in\mathbb R$,
define $d(x,\mathscr A_T)=\inf\{|x-y|: y\in \mathscr A_T\}$. Additionally, define 
$\displaystyle f(\gamma)=\gamma(1-\gamma)\sum_{i=0}^kf_i(\gamma)$. Let $\hat\tau_T$ be the estimated change point as defined by Equation (\ref{find-tauhat}), 
and $\mathscr A_T=\{y\in[\delta_T,1-\delta_T]: f(y)\ge f(\gamma),\ \forall\gamma\}$. 
Then $d(\hat\tau_T/T,\mathscr A_T)\stackrel{a.s.}{\to} 0$ as $T\to\infty$.
\end{thm}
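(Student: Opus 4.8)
The plan is to combine the two already-established facts about the single-cluster objective with a compactness/uniform-convergence argument to handle the full bisection step. First I would set up the normalized objective globally: for $\tilde\gamma\in[\delta_T,1-\delta_T]$ write $G_T(\tilde\gamma)=\frac1T\widehat{\mathcal Q}(\X_T(\tilde\gamma),\Y_T(\tilde\gamma);\alpha)$, and observe that $\hat\tau_T/T$ is (a point in) $\argmax_{\tilde\gamma} G_T(\tilde\gamma)$ over the admissible grid. The key input is that $G_T$ converges almost surely, uniformly in $\tilde\gamma$, to the limit $f(\gamma)$ defined in the statement. This is assembled from the pieces already proved: over each subinterval $[\gamma^{(i)},\gamma^{(i+1)}]$ Lemma~\ref{multi-uniform} gives uniform a.s.\ convergence of $\widehat{\mathcal E}$ to $f_i$, the scaling factor $\tilde\gamma(1-\tilde\gamma)$ is deterministic and continuous, and on the two extreme intervals Theorem~\ref{const} (via its uniform-convergence display) supplies the analogous statement; patching the finitely many intervals together gives global uniform convergence of $G_T$ to $f$ on $[\delta_T,1-\delta_T]$, with the understanding that $\delta_T\to0$ only enlarges the domain and does not hurt the uniform bound on any fixed compact subinterval.

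Next I would translate uniform convergence of the objective into convergence of the argmax set to $\mathscr A_T$. Fix $\epsilon>0$. Because $f$ is continuous on the compact interval and $\mathscr A_T$ is exactly its set of maximizers, there is $\eta>0$ such that $f(y)\le \max f-\eta$ whenever $d(y,\mathscr A_T)\ge\epsilon$; here one must note that, as $T\to\infty$, $\mathscr A_T$ stabilizes to the fixed finite set of global maximizers of the fixed limit function $f$ (these are among $\gamma^{(1)},\dots,\gamma^{(k)}$ by Lemmas~\ref{right-choice}, \ref{multi-uniform} and Theorem~\ref{const}), so $\eta$ can be chosen uniformly in $T$ for $T$ large. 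On the a.s.\ event where $\sup_{\tilde\gamma}|G_T(\tilde\gamma)-f(\tilde\gamma)|<\eta/3$ for all large $T$, a standard argmax-continuity computation shows the maximizer $\hat\gamma_T=\hat\tau_T/T$ cannot lie in $\{y: d(y,\mathscr A_T)\ge\epsilon\}$: if it did, then $\max G_T \le f(\hat\gamma_T)+\eta/3 \le \max f - \eta + \eta/3$, while evaluating $G_T$ at a true maximizer of $f$ gives $\max G_T \ge \max f - \eta/3$, a contradiction for $\eta$ large enough relative to $\eta/3$, i.e.\ since $\eta - \eta/3 > \eta/3$. Hence $d(\hat\gamma_T,\mathscr A_T)<\epsilon$ for all large $T$ on this event, and since $\epsilon>0$ was arbitrary and the event has probability one, $d(\hat\tau_T/T,\mathscr A_T)\stackrel{a.s.}{\to}0$.

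The last wrinkle is the discrete-to-continuous passage: $\hat\tau_T$ ranges over integers, so $\hat\tau_T/T$ ranges over a $1/T$-grid rather than the full interval, and $\mathscr A_T$ is defined via the continuous $f$. This is harmless because the limit function $f$ is Lipschitz (indeed $C^1$) on each closed subinterval away from the endpoints, so snapping any continuous near-maximizer to the nearest grid point changes $f$ by $O(1/T)$, which is absorbed into the $o(1)$ terms; equivalently, one runs the uniform-convergence argument directly with $G_T$ restricted to the grid, since uniform convergence over the interval implies uniform convergence over any subset.

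I expect the main obstacle to be bookkeeping rather than anything deep: making the "patching" of the finitely many subintervals fully rigorous, and in particular verifying that the function $f=\gamma(1-\gamma)\sum_i f_i(\gamma)$ is genuinely continuous across the junctions $\gamma^{(i)}$ (each $f_i$ is supported on $[\gamma^{(i)},\gamma^{(i+1)}]$ and vanishes outside, so one should check the one-sided limits agree — they do, since at a junction only the neighboring mixture terms are active and they reduce consistently). A secondary subtlety is confirming that $\mathscr A_T$ is eventually constant (or at least that $\eta$ can be taken uniform in $T$); this follows once one identifies the global maximizers of $f$ explicitly using Lemma~\ref{right-choice} and Theorem~\ref{const}, but it needs to be stated rather than glossed over. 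Everything else is the now-standard "uniform convergence of objectives $\Rightarrow$ convergence of argmaxes" argument.
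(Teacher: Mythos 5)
Your proposal is correct and follows essentially the same route as the paper's proof: uniform a.s.\ convergence of $\tfrac1T\widehat{\mathcal Q}$ to $f$ assembled from Lemma~\ref{multi-uniform} (and Theorem~\ref{const} on the end intervals), followed by the standard argmax-continuity sandwich to convert value-closeness into distance-closeness to $\mathscr A_T$. The only cosmetic difference is that the paper obtains the separation constant from the strict convexity of $\tilde\gamma(1-\tilde\gamma)f_i(\tilde\gamma)$ on each subinterval, whereas you obtain it from continuity and compactness; both suffice, and your explicit attention to the grid discretization and continuity across the junctions $\gamma^{(i)}$ only tightens points the paper leaves implicit.
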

%
\begin{proof}
First we observe that $\frac{1}{T}\widehat{\mathcal Q}(\X_T(\tilde\gamma),\Y_T(\tilde\gamma);\alpha)\stackrel{a.s.}{\to}f(\tilde\gamma)$ 
as $T\to\infty$, uniformly in $\tilde\gamma$ by Lemma \ref{multi-uniform}. Also, for each $i$, $\tilde\gamma(1-\tilde\gamma)f_i(\tilde\gamma)$ is a strictly convex function. 
Therefore, for $T$ large enough, $\delta_T<\gamma^{(1)}$ and $\gamma^{(k)}<1-\delta_T$, so that $\mathscr A_T\neq\emptyset$.  
Since 
$\tilde\gamma(1-\tilde\gamma)f_i(\tilde\gamma)$ is continuously differentiable and strictly convex, there exists a $c_i>0$, such that for any 
$\tilde\gamma_1,\tilde\gamma_2\in[\gamma^{(i)},\gamma^{(i+1)}]$,
\begin{equation}\label{convex-note}
|\tilde\gamma_1(1-\tilde\gamma_1)f_i(\tilde\gamma_1)-\tilde\gamma_2(1-\tilde\gamma_2)f_i(\tilde\gamma_2)|>c_i|\tilde\gamma_1-\tilde\gamma_2|+o(|\tilde\gamma_1-\tilde\gamma_2|).
\end{equation}
Let $\epsilon>0$. By Equation (\ref{convex-note}), there exists $\eta(\epsilon)>0$ such that if 
$d(\tilde\gamma,\mathscr A_T)>\eta(\epsilon)$, then $|f(\tilde\gamma)-f(x)|>\epsilon$, for all $x\in\mathscr A_T$. 
Now, let $\hat\gamma_T={\hat\tau_T}/{T}$ and $\gamma^* = \argmin_{x \in {\mathscr A_T}} |\hat\gamma_T-x|,$
then
$$f(\hat\gamma_T)+\frac{\epsilon}{2}>\frac{1}{T}\widehat{\mathcal Q}(\X_T(\hat\gamma_T),\Y_T(\hat\gamma_T);\alpha)\ge 
\frac{1}{T}\widehat{\mathcal Q}(\X_T(\gamma^*),\Y_T(\gamma^*);\alpha)>f(\gamma^*)-\frac{\epsilon}{2},$$
with probability 1. Combining the first and last terms in the above expression provides us with $f(\gamma^*)-f(\hat\gamma_T)<\epsilon$. 
Therefore, $\displaystyle P\!\left(\lim_{T\to\infty}d\!\left({\hat\tau_T}/{T},\mathscr A_T\right)\le\eta(\epsilon)\right)=1$, and since $\epsilon$ was arbitrary, we have established the claim.
\end{proof}

Repeated application of Theorem \ref{multi-const} shows that as $T\to\infty$, the first $k$ estimated change points will converge to the true change point locations in the manner described above. 
With a fixed significance level $p_0$, all of these will be identified. However, the testing procedure may identify additional spurious change points, the number of which is distributed geometric, by construction.


\section{Simulation Study}\label{simulations}

In this section we present simulation results from the E-Divisive procedure using various univariate and multivariate distributions.
We compare performance with the MultiRank procedure \citep[see,][]{Fong:2011},
which is based on a generalization of a Wilcoxon/Mann-Whitney (marginal) rank based approach, {the parametric Pruned Exact Linear Time (PELT) procedure} \citep{Killick:2012}{, and 
the nonparametric Kernel Change Point (KCP) procedure} \citep{Arlot:2012}.
Each simulation applies 
these methods to a set of 1,000 independent sequences with two change points, and 
computes the average Rand index \citep{Fowlkes:1983,Hubert:1985}, defined below, and approximate standard errors. 
All computation was completed using the statistical software {\tt R} \citep{R:citation}, using the {\tt ecp} package \citep[see][]{2013arXiv1309.3295J}.

Throughout this section the E-Divisive procedure was implemented with $\alpha=1;$ results for $\alpha=0.5, 1.5$ were similar, and within the margin of error. We used $R = 499$ iterations when performing the permutation test, 
which was conducted at the marginal $p_0 = 0.05$ significance level. Furthermore, we set the minimum cluster size for the E-Divisive procedure to 30.   
The MultiRank {and KCP} procedure require upper limit{s} on the number of change points, {these were} set to 
$\frac{T}{30}-1$, in which $T$ is the length of the sequence.

\subsection{Comparing Sets of Change Point Estimates}\label{rand}

To measure the performance of a particular method we calculate the Rand index \citep{Rand:1971} as well as Morey and Agresti's  Adjusted Rand index \citep{Morey:1984}. 
These indices represent a measure of similarity between two different partitions of the same observations. 
The first is most suitable for comparing an estimated set of change points to a baseline or known set of locations, while the second is tailored to compare two sets of estimated change points. In both cases, the number of change points in each set need not be equal. 

Suppose that the two clusterings of $T$ observations are given by $U = \{U_1,\dots, U_a\}$ and $V = \{V_1,\dots, V_b\}$, with $a$ and $b$ clusters, respectively. 
For these two clusterings, the Rand index is calculated by noting the relative cluster membership for all \emph{pairs} of observations. 
Consider the pairs of observation that fall into one of the following two sets:
$\{A\}$ pairs of observation in same cluster under $U$ and in same cluster under $V$;
$\{B\}$ pairs of observation in different cluster under $U$ and in different cluster under $V$. 
%
%
%
Let $\#A$ and $\#B$ denote the number of pairs of observation in each of these two sets, respectively. 
The Rand index is then defined as
$$\mbox{Rand } = \frac{\#A+\#B}{\binom{T}{2}}.$$
%

One shortcoming of the Rand index is that it is difficult to compare two different estimated sets of clusterings, since it does not measure the departure from a given baseline model. 
As mentioned in \cite{Hubert:1985}, the Rand index, as well as other similarity indices, are not adjusted for chance
(e.g., the index does not take on a constant value when comparing two random clusterings) for a given model of randomness. 
A common model of randomness, used in \cite{Hubert:1985} and \cite{Fowlkes:1983}, is the hypergeometric model,
which conditions on both the number of clusters and their sizes. 
Under this model, the adjustment for chance requires the expected index 
value and its maximum value. 
An Adjusted Rand index is then defined as
$$\mbox{Adjusted Rand } = \frac{\mbox{Rand} - \mbox{Expected Rand}}{1 - \mbox{Expected Rand}},$$
in which 1 corresponds to the maximum Rand index value.

\subsection{Univariate Analysis}\label{univariate}
In this section we compare the simulation performance of the E-Divisive, MultiRank, and the PELT algorithms on various univariate sequences. 
Within these simulations, 
we attempt to identify change points that resulted because of a distributional change in mean, variance, or tail shape. 
The magnitude of these respective changes was also varied, as shown in Table \ref{sim-all}. 

{For detecting changes in mean and variance, the E-Divisive procedure compares favorably with the parametric PELT procedure. Since the PELT 
procedure is specifically designed to only identify changes in mean or variance, we compare the E-Divisive and MultiRank procedures when considering 
changes in tail shape. 
The sample size was also varied $T = 150, 300, 600$, while the three clusters maintained equal sizes of $T/3,$ with distributions $N(0,1), G, N(0,1)$, respectively.  
We note that the Rand index values for the E-Divisive procedure tend towards $1$ as the sample size increases.
This follows from the consistency established in Theorem \ref{multi-const}. 

\begin{table}[ht]
\begin{center}
{ \footnotesize 
\begin{tabular}{|c|| c| l| l|| c| l| l|| c| c| c|}
\hline
  \multicolumn{1}{|c||}{}  & \multicolumn{3}{c||}{{\small\textbf{Change in Mean}}} & \multicolumn{3}{c||}{{\small\textbf{Change in Variance}}} & \multicolumn{3}{c|}{{\small\textbf{Change in Tail}}} \\
  \hline
  $T$ & $\mu$ & \multicolumn{1}{c|}{E-Divisive} &  \multicolumn{1}{c||}{PELT} & $\sigma^2$ & \multicolumn{1}{c|}{E-Divisive} &  \multicolumn{1}{c||}{PELT} & $\nu$ & E-Divisive & MultiRank\\
  \noalign{\hrule height 2pt}
\multirow{3}{*}{150}  &   1   & $0.950_{0.001}$ & $0.945_{0.002}$ &           2   & $0.907_{0.003}$ &$0.935_{0.002}$ &   16  & $0.835_{0.017}$&$0.631_{0.005}$ \\
\cline{2-10}          &   2   & $0.992_{4.6\e{-4}}$ & $0.990_{4.1\e{-4}}$  &   5   & $0.973_{0.001}$ &$0.987_{4.7\e{-4}}$ &   8   & $0.836_{0.020}$&$0.648_{0.005}$  \\
\cline{2-10}          &   4   & $1.000_{3.7\e{-5}}$ & $0.999_{9.3\e{-5}}$ &   10  & $0.987_{7.1\e{-4}}$ &$0.994_{2.7\e{-4}}$ &   2   & $0.841_{0.011}$&$0.674_{0.004}$ \\
\hline\hline 
\multirow{3}{*}{300}  &   1   & $0.972_{9.1\e{-4}}$   & $0.973_{8.9\e{-4}}$  &   2    & $0.929_{0.003}$&$0.968_{0.001}$ &   16  & $0.791_{0.015}$&$0.624_{0.007}$ \\
\cline{2-10}          &   2   & $0.996_{2.2\e{-4}}$   & $0.994_{2.3\e{-4}}$  &   5    & $0.990_{5.1\e{-4}}$&$0.994_{2.1\e{-4}}$ &   8   & $0.729_{0.018}$&$0.639_{0.006}$\\
\cline{2-10}          &   4   & $1.000_{1.0\e{-5}}$   & $1.000_{4.5\e{-5}}$  &   10   & $0.994_{3.2\e{-4}}$&$0.998_{1.2\e{-4}}$ &   2   & $0.815_{0.006}$&$0.682_{0.006}$\\
\hline\hline  
\multirow{3}{*}{600}  &   1   & $0.987_{1.5\e{-5}}$ & $0.987_{4.1\e{-4}}$ &   2   & $0.968_{0.001}$ &$0.984_{5.1\e{-4}}$ &   16  & $0.735_{0.019}$&$0.647_{0.016}$ \\
\cline{2-10}          &   2   & $0.998_{3.9\e{-6}}$ & $0.997_{1.1\e{-4}}$ &   5   & $0.995_{2.2\e{-4}}$ &$0.997_{1.1\e{-4}}$ &   8   & $0.743_{0.025}$&$0.632_{0.016}$\\
\cline{2-10}          &   4   & $1.000_{3.1\e{-7}}$ & $1.000_{2.3\e{-5}}$ &   10  & $0.998_{1.5\e{-4}}$ &$0.999_{6.4\e{-5}}$ &   2   & $0.817_{0.006}$&$0.708_{0.010}$\\
\hline   
\end{tabular} }
\caption{\label{sim-all} Average Rand index and approximate standard errors from 1,000 simulations for the E-Divisive, PELT and MultiRank methods. 
Each sample has $T = 150, 300 \;\mathrm{or}\; 600$ observations, consisting of three equally sized clusters, with distributions $N(0,1), G, N(0,1)$, respectively. 
For changes in mean $G = N(\mu,1)$, with $\mu = 1, 2,$ and $4$;
for changes in variance $G = N(0,\sigma^2)$, with $\sigma^2 = 2, 5,$ and $10$;
and for changes in tail shape $G = t_{\nu}(0,1)$, with $\nu = 16, 8,$ and $2$.
}
\end{center}
\end{table}

\vspace{-12pt}
\subsection{Multivariate Analysis}\label{multivariate}

We next compare the results of running 
the E-Divisive, KCP and MultiRank methods on bivariate observations. 
In these simulations the distributional differences are either a change in mean or correlation. 
The results of these simulations can be found in Table \ref{sim-bi-all}. 
Let $N_2(\bm{\mu},\Sigma_{\rho})$ denote the bivariate normal distribution with mean vector $\bm{\mu} = (\mu, \mu)'$ and covariance matrix 
$\Sigma_{\rho}=\begin{pmatrix}1&\rho\\\rho&1\end{pmatrix}$ for $\rho \in (-1,1)$, or simply the identity $I$ for $\rho = 0$.
We use the same setup as in the previous section, with observations from $N_2(\bm{0},I), G, N_2(\bm{0},I)$ distributions, respectively.  

For a simultaneous change in mean, with $G = N_2(\bm{\mu},I)$, all methods performed similarly. 
When detecting changes in correlation, with $G = N_2(\bm{0},\Sigma_{\rho}),$ the KCP approach performed best when the sample size was sufficiently 
large for it to detect any changes. However, its computational time was about three times longer than E-Divisive, for these simulations. 
The MultiRank method was not reliable for detecting changes in correlation. 


\begin{table}[ht]
\begin{center}
{ \small 
\begin{tabular}{|c|| c| c| c| c|| c| c| c| c|}
\hline
  \multicolumn{1}{|c||}{}  & \multicolumn{4}{c||}{{\small\textbf{Change in Mean}}} & \multicolumn{4}{c|}{{\small\textbf{Change in Correlation}}} \\
    \hline
      $T$ & $\mu$ & E-Divisive & KCP& MultiRank& $\rho$ & E-Divisive& KCP& MultiRank\\
        \noalign{\hrule height 2pt}
	\multirow{3}{*}{300}  &   1   & $0.987_{4.7\e{-4}}$ & $0.985_{6.6\e{-4}}$ & $0.983_{4.8\e{-4}}$&  0.5   & $0.712_{0.018}$ & $0.331_{N/A \;}$ & $0.670_{0.006}$\\
	\cline{2-9}                &     2 & $0.992_{8.9\e{-5}}$& $0.998_{1.1\e{-4}}$& $0.991_{1.1\e{-4}}$& 0.7   & $0.758_{0.021}$ & $0.331_{N/A \;}$ & $0.723_{0.004}$\\
	\cline{2-9}                &    3   & $1.000_{1.3\e{-5}}$& $1.000_{3.9\e{-5}}$& $0.991_{5.1\e{-5}}$&  0.9   & $0.769_{0.017}$ & $0.331_{N/A \;}$ & $0.748_{0.002}$\\
	\hline\hline   
	\multirow{3}{*}{600}  &   1   & $0.994_{2.2\e{-4}}$ & $0.993_{2.3\e{-4}}$ & $0.992_{2.1\e{-4}}$&  0.5   & $0.652_{0.022}$ & $0.331_{N/A \;}$ & $0.712_{0.011}$\\
	\cline{2-9}                &     2 & $1.000_{4.3\e{-5}}$ & $0.999_{5.2\e{-5}}$ & $0.995_{5.3\e{-5}}$&  0.7   & $0.650_{0.017}$ & $0.848_{0.073}$& $0.741_{0.006}$\\
	\cline{2-9}                &    3   & $1.000_{3.3\e{-6}}$ & $1.000_{2.2\e{-5}}$ & $0.996_{2.7\e{-5}}$&  0.9   & $0.806_{0.019}$ &  
$0.987_{0.001}$& $0.748_{0.002}$\\
	\hline\hline   
	\multirow{3}{*}{900}  &   1   & $0.996_{1.6\e{-4}}$ & $0.995_{1.6\e{-4}}$ & $0.995_{1.3\e{-4}}$&  0.5   & $0.658_{0.024}$ & $0.778_{0.048}$ & $0.666_{0.044}$\\
	\cline{2-9}                &     2 &$1.000_{3.0\e{-5}}$ & $0.999_{4.0\e{-5}}$ & $0.997_{3.5\e{-5}}$&  0.7   & $0.633_{0.022}$ & $0.974_{0.002}$ & $0.764_{0.021}$\\
	\cline{2-9}                &    3   & $1.000_{5.2\e{-6}}$ & $1.000_{1.4\e{-5}}$ & $0.997_{1.8\e{-5}}$&  0.9   & $0.958_{0.004}$ & $0.992_{0.004}$ & $0.741_{0.006}$\\
	\hline
	\end{tabular} }
	\caption{\label{sim-bi-all}
	Average Rand index and approximate standard errors from 1,000 simulations for the E-Divisive, MCP and MultiRank methods. 
	Each sample has $T = 300, 600 \;\mathrm{or}\; 900$ observations, consisting of three equally sized clusters, with distributions $N_2(\bm{0},I), G, N_2(\bm{0},I)$, respectively. 
	For changes in mean $G = N_2(\bm{\mu},I)$, with $\bm{\mu} = (1,1)', (2,2)',$ and $(3,3)'$;
	for changes in correlation $G = N(\bm{0},\Sigma_{\rho})$, in which the diagonal elements of $\Sigma_{\rho}$ are $1$ and the off-diagonal are $\rho$, with $\rho = 0.5, 0.7,$ and $0.9$.
	}
	\end{center}
	\end{table}

The final multivariate simulation examines the performance of the E-Divisive method as the dimension of the data increases. 
In this simulation we consider two scenarios.
\emph{With noise}: in which added components are independent, and do not have a change point.  
\emph{No noise}: in which the added dimensions are correlated, and all marginal and joint distributions have common change point locations. 
The setting is similar to above; each sample of $T = 300, 600,$ or $900$ observations consist of three equally sized clusters, with distributions $N_d(\bm{0},I), G, N_d(\bm{0},I)$, respectively, in which $d$ denotes the dimension, for which we consider $d = 2, 5$ or $9$. 

For the no noise case, we consider $G = N_d(\bm{0},\Sigma_{0.9})$,
in which the diagonal elements of $\Sigma_{0.9}$ are $1$ and the off-diagonal elements are $0.9$.
For the with noise case, we consider $G = N_d(\bm{0},\Sigma_{0.9}^{noise})$,
in which the diagonal elements of $\Sigma_{0.9}^{noise}$ are $1$ and \emph{only} the $(1,2)$ and $(2,1)$ elements are $0.9$, the others are zero, such that a change in distribution occurs in the correlation of only the first two components. 
The results are shown in Table \ref{sim-cordim}. 
The performance of the E-Divisive method improves with increasing dimension when all components of the observed vectors are related, i.e.,\ no noise, even when the number of observations $T$ is fixed. However, the opposite is true when the additional 
components are independent with no change points. 
We conjecture that our method performs better when there are simultaneous changes within the components, and in the presence of noise, dimension reduction may be necessary to obtain comparable performance.

\begin{table}[ht]
\begin{center}
{ \small 
\begin{tabular}{|c| c| l| c|}
  \hline
  $T$ & $d$ & \multicolumn{1}{c|}{{\small\textbf{{No} Noise}}} & \multicolumn{1}{c|}{{\small\textbf{{With} Noise}}}\\ 
  \noalign{\hrule height 2pt}
\multirow{3}{*}{300}  &   2 & $0.723_{0.019}$ & $0.751_{0.018}$ \\
\cline{2-4}
   &    5 & $0.909_{0.010}$&  $0.706_{0.019}$ \\
\cline{2-4}
   &    9  &  $0.967_{0.003}$ & $0.710_{0.026}$ \\
\hline\hline
\multirow{3}{*}{600}  &   2   & $0.930_{0.018}$ & $0.822_{0.019}$ \\
\cline{2-4}
   &   5 &  $0.994_{5.4\e{-4}}$ & $0.653_{0.023}$ \\
\cline{2-4}
   &    9  & $0.997_{3.3\e{-4}}$ & $0.616_{0.021}$ \\
\hline\hline
\multirow{3}{*}{900}  &   2   & $0.967_{0.003}$ & $0.966_{0.003}$ \\
\cline{2-4}
   &     5 &  $0.998_{1.8\e{-4}}$  & $0.642_{0.018}$ \\
\cline{2-4}
   &    9  & $0.999_{1.0\e{-4}}$  & $0.645_{0.021}$ \\
\hline
\end{tabular} }
\caption{ \label{sim-cordim}
Average Rand index and approximate standard errors from 1,000 simulations for the E-Divisive method.
Each sample has $T = 300, 600 \;\mathrm{or}\; 900$ observations, consisting of three equally sized clusters, with distributions $N_d(\bm{0},I), G, N_d(\bm{0},I)$, respectively, in which $d = 2, 5$ or $9$ denotes the dimension.
For the no noise case, $G = N_d(\bm{0},\Sigma_{0.9})$,
in which the diagonal elements of $\Sigma_{0.9}$ are $1$ and the off-diagonal are $0.9$.
For the with noise case, $G = N_d(\bm{0},\Sigma_{0.9}^{noise})$,
in which the diagonal elements of $\Sigma_{0.9}^{noise}$ are $1$ and \emph{only} the $(1,2)$ and $(2,1)$ elements are $0.9$, the others are zero.
}
\end{center}
\end{table}

\vspace{-24pt}
\section{Applications}\label{real-data}
We now present results from applying the proposed E-Divisive procedure, and others, to genetics and financial datasets. 

\subsection{Genetics Data}\label{genetics}

We first consider the genome data from \cite{Vert:2011}. Genome samples for 57 individuals with a bladder tumor are scanned for variations 
in DNA copy number using array comparative genomic hybridization (aCGH). The relative hybridization intensity with respect to a normal genome 
reference signal is recorded. These observations were normalized so that the modal ratio is zero on a logarithmic scale. 

The approach in  \cite{Vert:2011} assumes that each sequence is constant between change points, with additive noise. Thus, this approach is 
primarily concerned with finding a distributional change in the mean. In order to directly apply the procedures we first account for missing 
values in the data; for simplicity, we imputed the missing values as the average of their neighboring values. 
We removed all series that had more than 7\% of values missing; leaving genome samples of 43 individuals for analysis.

When applied to the 43-dimension joint series of individuals, the MultiRank algorithm found 43 change points, while the E-Divisive algorithm found 97 change points, using 
$\alpha = 1$, a minimum cluster size of 10 observations, $R = 499$ permutations and $p_0 = 0.05$ in our significance testing.
%
%
Estimated change point locations, for individual 10, under four methods 
are shown in Figure \ref{fig:genetic}.
MultiRank estimated 17 change points, with adjusted Rand values of 0.572 (Kernel CP), 0.631 (PELT), 0.677 (E-Divisive), respectively.
KCPA estimated 41 change points, with adjusted Rand values of 0.678 (PELT), 0.658 (E-Divisive), respectively. 
PELT estimated 47 change points, with adjusted Rand value of 0.853 (E-Divisive), 
and E-Divisive estimated 35 change points. 

\begin{figure}[ht]
	\centering
	\includegraphics[scale = 0.48]{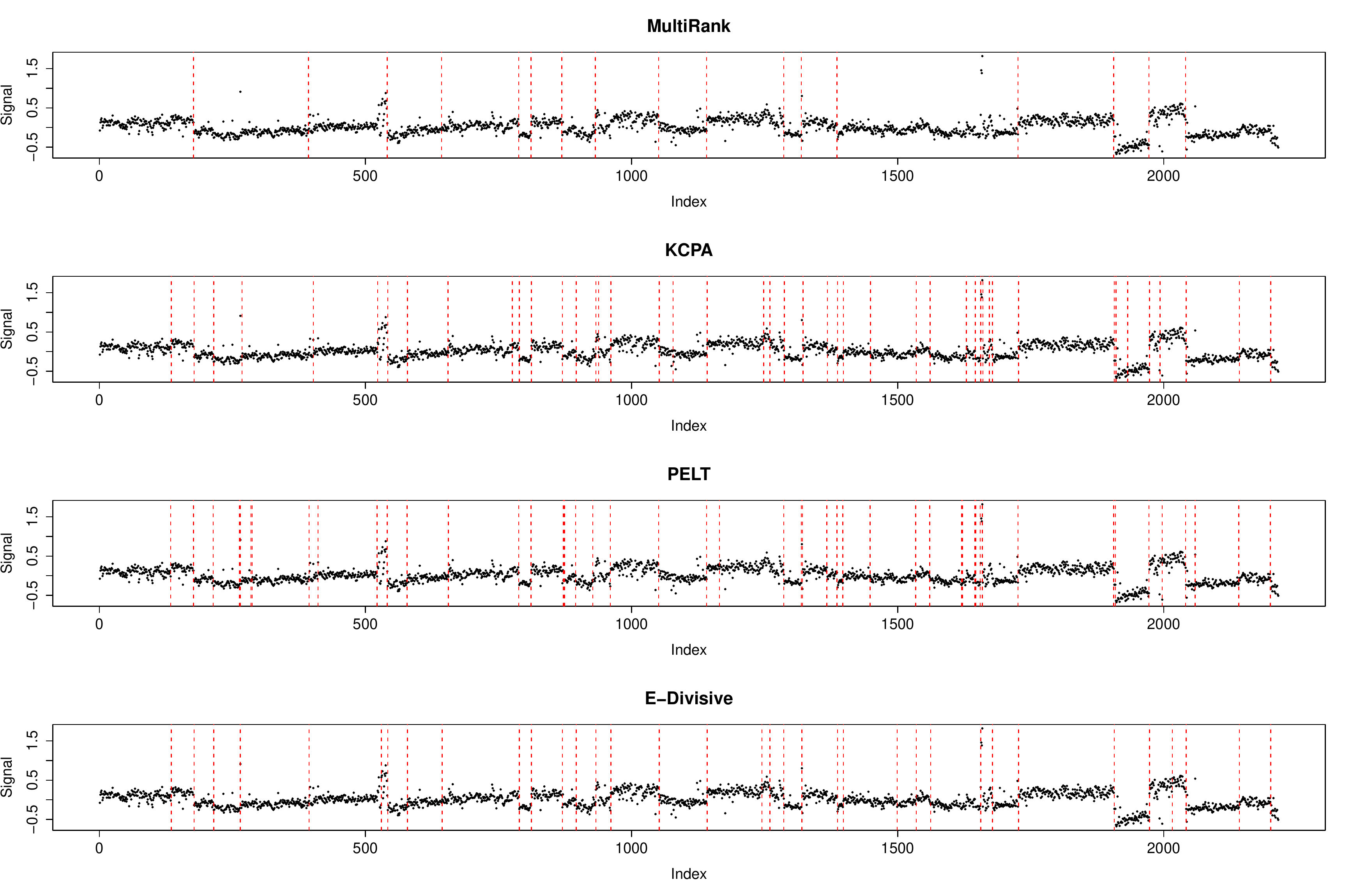} 
	\caption{The normalized relative aCGH signal for the tenth individual with a bladder tumor;
 	the estimated change point locations for the MultiRank, KCPA, PELT and E-Divisive methods
	are indicated by the dashed vertical lines. 
}
	\label{fig:genetic}
\end{figure}

\subsection{Financial Data}\label{financial}
Here we apply the E-Divisive algorithm to the 262 monthly log returns for Cisco Systems Inc. stock, an industry leader in the design and manufacturing 
of networks, from April 1990 through January 2012. In our analysis we specified $\alpha = 1$, a minimum cluster size of 30 observations, and used $R = 499$ permutations 
with a level of $p_0 = 0.05$ in our significance testing. We estimated two significant change points, both with approximate p-values below 0.03. 
The series is shown in Figure \ref{fin1} with vertical lines to denote the estimated change point locations at April 2000 and October 2002.
\begin{figure}[ht]
	\centering
	\includegraphics[width=\columnwidth]{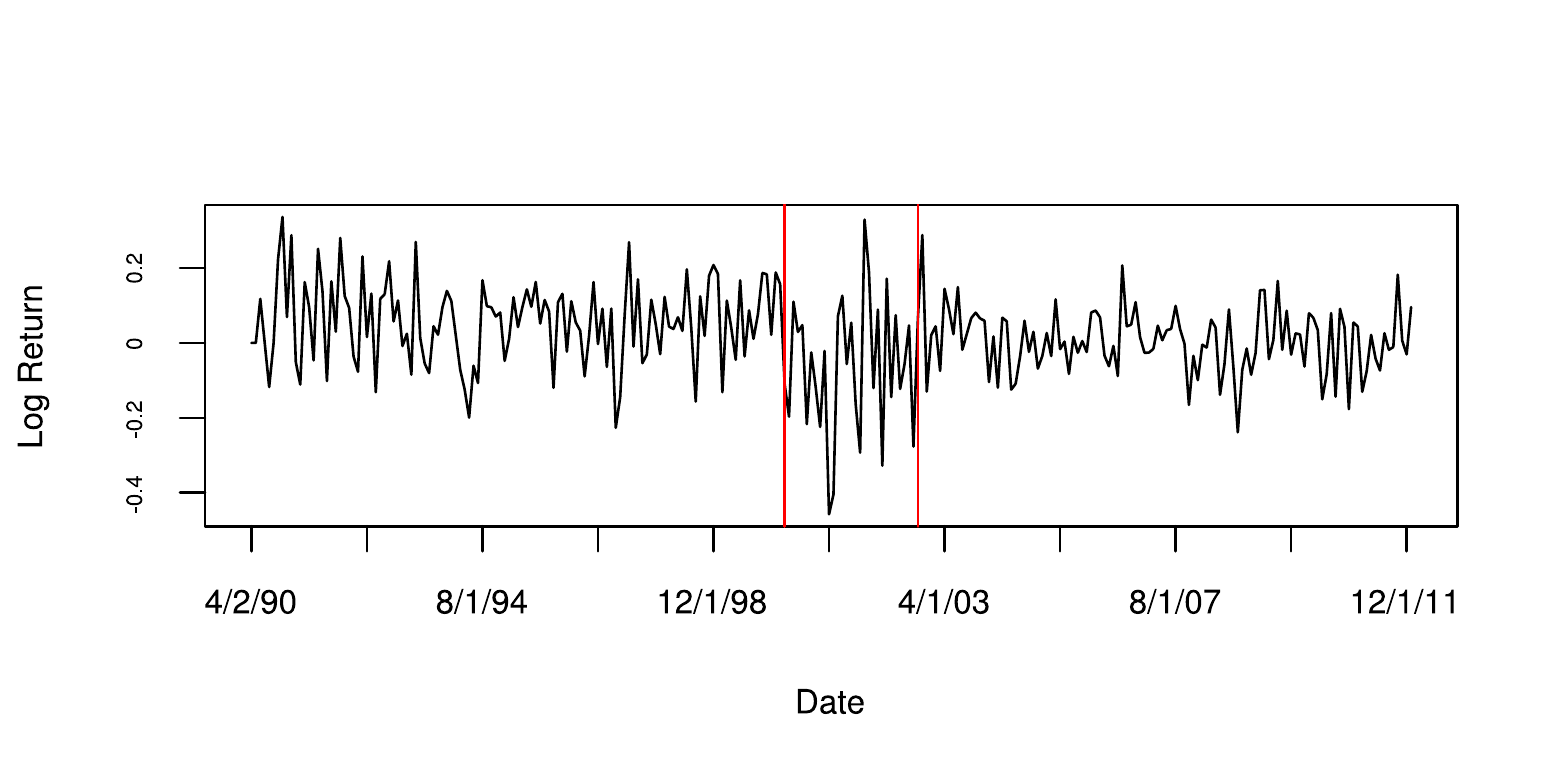}
	\caption{Monthly log returns for Cisco Systems Inc.\ stock, from April 1990 through January 2012;
	the E-Divisive procedure estimates significant changes in distribution at the vertical lines April 2000 and October 2002.
	}
	\label{fin1}
\end{figure}

The change point in April of 2000 corresponds to the company's acquisition of Pirelli Optical Systems to counter rising competitors Nortel and 
Lucent. The acquisition allowed Cisco to provide its customers with lower network costs and a more complete network infrastructure. The 
October 2002 change point represents the end of a period of highly aggressive ventures in emerging markets, during which Cisco was chosen 
to develop a multi-billion dollar network for Shanghai, which became China's largest urban communications network.

Figure \ref{fin2} shows distributional comparisons between the three time periods. Quantile-quantile plots between adjacent time periods are shown in 
the first two plots and kernel density estimates for each of the three periods are shown in the third plot. Included with the kernel density estimates 
are  95\% point-wise confidence bands, which were created by applying a bootstrap procedure to each of the three time periods. The second time period is 
relatively more volatile and skewed than either of its neighboring time periods. 

\begin{figure}[ht]
	\centering
	\includegraphics[width=\columnwidth]{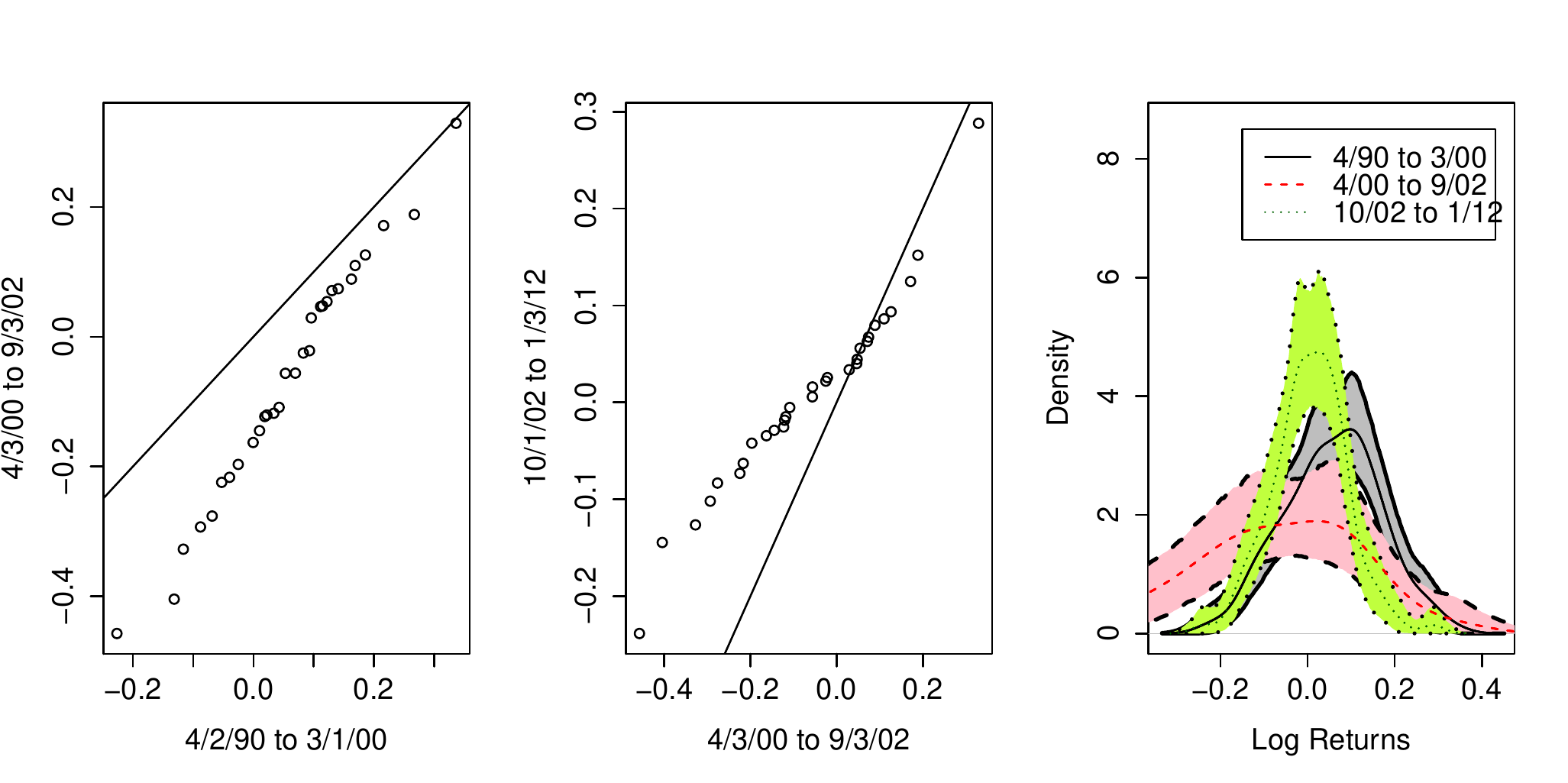}
	\caption{Distributional comparisons between the estimated change points from the E-Divisive procedure: (a,b) quantile-quantile plots between 
adjacent time periods; and (c) kernel density estimates for each period with 95\% confidence bands.}
	\label{fin2}
\end{figure}

To graphically support the assumption of independent observations 
within clusters, Figure \ref{fin3} shows several lags of the sample auto-correlation function (ACF) for the returns (top row) and the squared returns 
(bottom row), for the entire period (first column) and each sub-period (later columns). The dashed horizontal lines represent approximate 95\% confidence intervals about zero, suggesting that the lagged 
correlation statistics are not significant. Within sub-periods there is no significant serial correlation or conditional heteroskedasticity. 
Although there appears to be minor serial dependence when studying the entire series, this is an artifact of the distributional changes over time.

\begin{figure}[!ht]   
	\centering
	\includegraphics{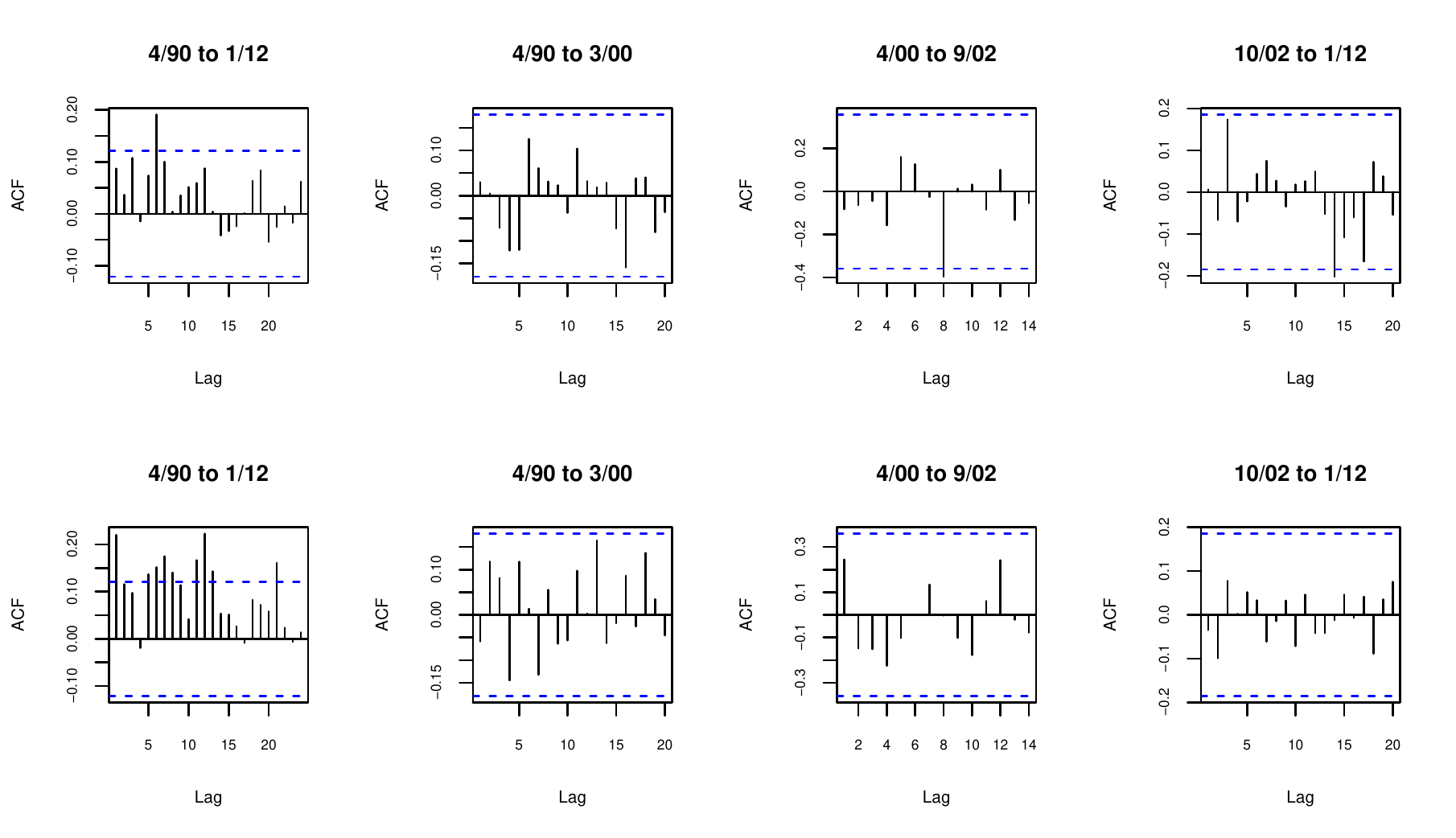}
	\caption{Sample auto-correlation function for the returns (top row) and the squared returns (bottom row), for the entire period (first column) and each estimated sub-period (later columns). The dashed horizontal lines represent approximate 95\% confidence intervals about zero.}
	\label{fin3}
\end{figure}

\section{An Agglomerative Algorithm}\label{agglomerative}

Our hierarchical approach up to this point has only considered the use of a divisive algorithm. 
However, its practical computational effort limits the length of the series that it can analyze without pre-partitioning the series for segment-wise analysis. 
In this section we present an agglomerative approach that has practical advantages over the E-Divisive method. 
Even though this method also has computational time that is quadratic in the length of the series, in practice it runs much faster than the E-Divisive approach. This reduction is accomplished 
by only considering a relatively small subset of possible change point locations; 
a similar restriction to the E-Divisive approach does not result in any computational savings.

\subsection{Overview}
Suppose the sequence of observations $Z_1,Z_2,\dots,Z_T$ are independent, each with finite $\alpha{th}$ absolute moment, for some $\alpha\in(0,2)$. Unlike most general purpose agglomerative clustering algorithms, the proposed procedure will preserve the time ordering of the observations. 
The number of change points will be estimated by the maximization of a goodness-of-fit statistic.

Suppose that we are initially provided a clustering ${\cal C}=\{C_1,C_2,\dots,C_n\}$ of $n$ clusters. These clusters need not consist of a single observation. We then impose 
the following restriction on which clusters are allowed to be merged. Suppose that $C_i=\{Z_k,Z_{k+1},\dots,Z_{k+t}\}$ and $C_j=\{Z_{\ell},Z_{\ell+1},\dots,Z_{\ell+s}\}$. 
To preserve the time ordering, we allow $C_i$ and $C_j$ to merge if either $k+t+1=\ell$ or $\ell+s+1=k$, that is, if $C_i$ and $C_j$ are adjacent. 

To identify which adjacent pair of clusters to merge we use a goodness-of-fit statistic, defined below. 
We greedily optimize this statistic by merging the pair of adjacent clusters that results in either the largest increase or smallest decrease of the statistic's value. 
This process is repeated, recording the goodness-of-fit statistic at each step, until all observations belong to a single cluster. 
Finally, 
the estimated number of change points is estimated by the clustering that maximizes the goodness-of-fit statistic
over the entire merging sequence.

\subsection{Goodness-of-Fit}
The goodness-of-fit statistic we employ is the between-within distance among adjacent clusters.
Suppose that ${\cal C}=\{C_1,
C_2,\dots,C_n\}$, then 
\be\label{gof}
 \widehat{\cal S}_n({\cal C}; \alpha) = \sum_{i=1}^{n-1}\widehat{\cal Q}(C_i,C_{i+1}; \alpha),
\ee
in which $C_i$ and $C_{i+1}$ are adjacent, arranged by relabeling the clusters as necessary,
and $\widehat{\cal Q}$ is defined analogous to Equation (\ref{eqn6}). 

Initialization of the merging sequence $\{ \widehat{\cal S}_k : k = n, \ldots, 2\}$ is performed by calculating $\widehat{\cal Q}$ for \emph{all} pairs of clusters, similar to any agglomerative algorithm. 
We additionally note that once a pair of clusters has been merged, the statistic $\widehat{\cal S}_k$ can be updated to $\widehat{\cal S}_{k-1}$ in ${\cal O}(1)$; hence, the overall complexity of this approach is ${\cal O}(T^2)$.

\subsection{Toronto EMS Data}
In this section we apply the agglomerative algorithm to a spatio-temporal point process dataset. 
Data was collected during 2007 in the city of Toronto for all high priority emergency medical services (EMS) that required at least one ambulance. 
For each of these events a time rounded to the nearest second and a spatial location latitude and longitude were recorded. 
The hourly city-wide emergency event arrival rate was modeled in \cite{MaMcWoHe2011};
 exploratory analysis immediately reveals that the spatial distribution also changes with time. 
This is largely driven by the relative changes in population density as individuals move throughout the city.  

After removing data from holidays and special events, we found significant distributional changes across the course of a week, but little variation from week to week. 
Here we investigate the intra-week changes by pooling all of the approximately 200,000 events from 2007 into a single weekly period, in which time indicates seconds since midnight Saturday.
 Because of the large number of observations, we initialize the agglomerative algorithm by first partitioning the week into 672 equally spaced 15 minute periods. 

The results from running the algorithm with $\alpha = 1$ are shown in the top of Figure \ref{EMS}.
The goodness-of-fit measure in Equation (\ref{gof}) was maximized at 31 change points. 
The estimated change point locations occur everyday, primarily in the evening. 
Several changes occur after little duration, indicating times when the spatial distribution is quickly changing. 
Density estimates from observation in three adjacent cluster periods are shown, on the square-root scale, in the bottom of Figure \ref{EMS}. We note a persistently large density in the downtown region and various shape changes in the outlying regions.  

\begin{figure}[ht]
	\centering
\includegraphics[scale=.49]{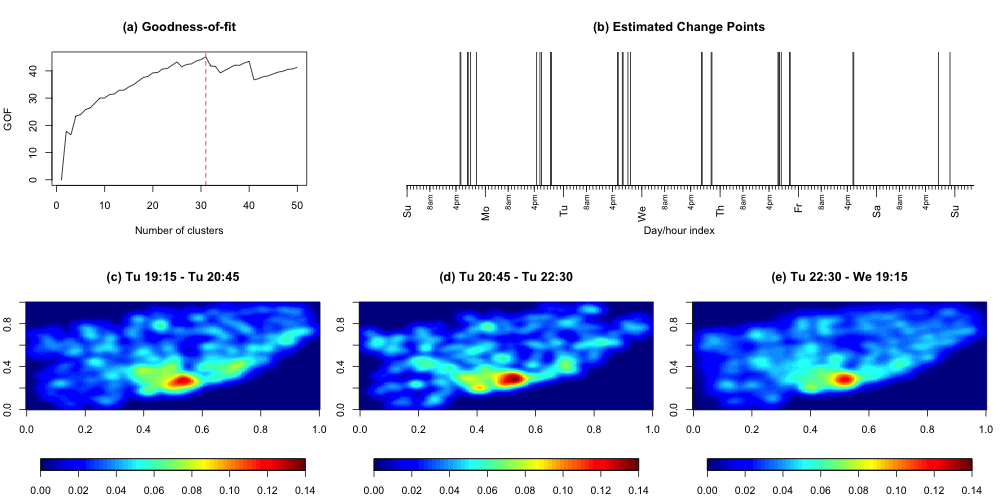}
	\caption{Results from application of the proposed agglomerative algorithm on the Toronto EMS ambulance data:
	(a) the goodness-of-fit measure of Equation (\ref{gof});
	(b) the 31 estimated change point locations;
	and spatial density estimates, on the square-root scale, from observation in three adjacent cluster periods 
	(c) Tuesday 19:15 - 20:45,
	(d) Tuesday 20:45 - 22:30, and
	(e) Tuesday 22:30 - Wednesday 19:15.
	}
	\label{EMS}
\end{figure}

\section{Conclusion}\label{conclusion}
We have presented a method to perform multiple change point analysis of an independent sequence of multivariate observations. 
We are able to consistently detect \textit{any} type of distributional change, and 
do not make any assumptions beyond the existence of the $\alpha{th}$ absolute moment, for some $\alpha\in(0,2)$. 
The proposed methods are able to estimate both the number of change points 
and their locations, thus eliminating the need for prior knowledge or supplementary analysis, unlike the methods presented in 
\cite{Hawkins:2001},  \cite{Lavielle:2006}, or \cite{Fong:2011}. 
Furthermore, this advantage does not come at the expense of additional computational complexity; similar to the previously mentioned methods, 
the proposed approach is $\mathcal O(kT^2)$.

Both divisive and agglomerative versions of this method have been presented. The divisive version hierarchically tests the 
statistical significance of each hierarchically estimated change point, while the agglomerative version proceeds by optimizing a goodness-of-fit statistic. 
Because we have established consistency for the divisive procedure we prefer it in practice, 
even though its computation is dependent on the number of change points that are estimated.


\baselineskip=12pt

 \section*{Acknowledgments}
 We would like to thank the three referees and Associate Editor for their careful review of the manuscript and helpful comments.
The authors sincerely thank Toronto EMS for sharing their data. 
The authors are also grateful to Louis C.\ Segalini for his research assistance in preparing the financial data analysis. 
This work was partially supported by National Science Foundation Grant Number CMMI-0926814.

\bibliographystyle{JASA}
\bibliography{mcpr8}		

\end{document}